\newtheorem{theorem}{Theorem}[section]
\newtheorem{definition}[theorem]{Definition}
\newtheorem{lemma}[theorem]{Lemma}
\newtheorem{proof}[theorem]{Proof}
\newtheorem{corollary}[theorem]{Corollary}
\newtheorem{conjecture}[theorem]{Conjecture}
\title{Quantum Algorithms for Gowers Norm Estimation, Polynomial Testing, and Arithmetic Progression Counting over Finite Abelian Groups}
\author{
  En-Jui Kuo\thanks{Department of Electrophysics, National Yang Ming Chiao Tung University, Hsinchu, Taiwan, R.O.C.}
}
\date{}
\begin{document}

\maketitle

\begin{abstract}
We propose a family of quantum algorithms for estimating Gowers uniformity norms \( U^k \) over finite abelian groups and demonstrate their applications to testing polynomial structure and counting arithmetic progressions. Building on recent work for estimating the \( U^2 \)-norm over \( \mathbb{F}_2^n \), we generalize the construction to arbitrary finite fields and abelian groups for higher values of \( k \). Our algorithms prepare quantum states encoding finite differences and apply Fourier sampling to estimate uniformity norms, enabling efficient detection of structural correlations.

As a key application, we show that for certain degrees \( d = 4, 5, 6 \) and under appropriate conditions on the underlying field, there exist quasipolynomial-time quantum algorithms that distinguish whether a bounded function \( f(x) \) is a degree-\( d \) phase polynomial or far from any such structure. These algorithms leverage recent inverse theorems for Gowers norms, together with amplitude estimation, to reveal higher-order algebraic correlations.

We also develop a quantum method for estimating the number of 3-term arithmetic progressions in Boolean functions \( f : \mathbb{F}_p^n \to \{0,1\} \), based on estimating the \( U^2 \)-norm. Though not as query-efficient as Grover-based counting, our approach provides a structure-sensitive alternative aligned with additive combinatorics.

Finally, we demonstrate that our techniques remain valid under certain quantum noise models, due to the shift-invariance of Gowers norms. This enables noise-resilient implementations within the NISQ regime and suggests that Gowers-norm-based quantum algorithms may serve as robust primitives for quantum property testing, learning, and pseudorandomness.
\end{abstract}

\section{Introduction}

Fourier analysis has long played a central role across mathematics, physics, computer science, and engineering. Its discrete counterpart—Fourier analysis over finite groups—has been developed through group representation theory~\cite{Serre1977, FultonHarris1991}, enabling applications in molecular dynamics~\cite{muller2024symmetry}, crystallography~\cite{Cotton1990}, and quantum field theory~\cite{Georgi1999}. In quantum computing, this theory underpins the quantum Fourier transform (QFT), a fundamental subroutine in many celebrated quantum algorithms.

Notably, the QFT enables the exponential speedup in Shor's factoring algorithm~\cite{Shor1997}, and plays a central role in the Hidden Subgroup Problem (HSP) framework~\cite{Jozsa2001}, which abstracts the core structure behind several quantum speedups, including for period finding, discrete log, and dihedral hidden shift problems~\cite{Childs2007}. These advances illustrate how symmetry and spectral structure—captured via Fourier transforms—can be exploited in quantum algorithms.

In parallel, classical theoretical computer science has leveraged Fourier analysis over $\mathbb{F}_2^n$ to great effect. The BLR linearity test~\cite{Blum1993}, for example, introduced the modern theory of property testing, and inspired subsequent tools in coding theory~\cite{Arora1998}, learning theory~\cite{KushilevitzMansour1993}, and pseudorandomness~\cite{ODonnellBook}. These results typically hinge on the behavior of low-degree Fourier coefficients, or linear characters.

However, such linear Fourier techniques have intrinsic limitations. For instance, Roth's theorem~\cite{Roth1953} uses linear Fourier analysis to prove that dense subsets of the integers contain 3-term arithmetic progressions (3-APs), but Szemerédi’s theorem for 4-APs lies beyond the reach of linear methods. This prompted the development of \emph{higher-order Fourier analysis}, initiated by Gowers~\cite{Gowers2001} and later extended by Green, Tao, Ziegler, Hatami, Lovett, and others~\cite{GreenTaoU3, GTZ2012, hatami2019higher}. At the heart of this theory are the Gowers uniformity norms \( \|f\|_{U^k} \), which detect hidden polynomial structure of degree \( < k \). The inverse theorem asserts that if \( \|f\|_{U^k} \) is large, then \( f \) must correlate with a degree-\( (k-1) \) phase polynomial.

Despite their foundational impact in additive combinatorics and number theory, Gowers norms have received relatively little attention in the context of quantum computing. 

Recently, \cite{jothishwaran2020quantum, bera2021quantum} introduced a quantum algorithm to estimate the Gowers \( U^2 \)-norm over \( \mathbb{F}_2^n \), laying the groundwork for a norm-based approach to quantum property testing. Their work focused primarily on Boolean functions and testing quadraticity, but did not extend to higher-order uniformity or more general finite abelian groups. Gowers norms have also been employed in other quantum contexts, such as self-testing protocols~\cite{westdorpquantum}, and bounding the complexity of quantum states with low stabilizer rank~\cite{mehraban2025improved}, illustrating the growing relevance of higher-order Fourier analysis in quantum information theory.

In this paper, we build on this direction and develop quantum algorithms to estimate \( U^k \)-norms over arbitrary finite abelian groups \( G = \mathbb{F}_p^n \). Unlike in the Boolean setting, functions over such groups can exhibit genuine higher-order structure, making this regime nontrivial and rich. Our work reveals that the tools of higher-order Fourier analysis can be natively implemented via quantum circuits, and used to test algebraic structure beyond what is classically feasible.

\paragraph{Our contributions.} We summarize our main results as follows:
\begin{itemize}
    \item We construct quantum algorithms for estimating the Gowers uniformity norm \( \|f\|_{U^k} \) for any \( k \geq 2 \), using phase oracles and quantum Fourier transforms over \( \mathbb{F}_p^n \). Our circuits generalize the \( U^2 \)-norm algorithm of~\cite{jothishwaran2020quantum} to arbitrary prime fields. The main result appears in Theorem~\ref{thm:linear}.
    
    \item Leveraging recent inverse theorems for Gowers norms~\cite{milicevic2022quantitative, milicevic2024quasipolynomial}, we show that if \( f \) is \( \varepsilon \)-far from all degree-\( d \) phase polynomials, then \( \|f\|_{U^{d+1}} \) must be small. This enables quasipolynomial-time quantum algorithms for approximate polynomial testing, extending prior results that were restricted to \( d = 1, 2 \)~\cite{Gharibian2020, mehraban2025improved}. The main result is presented in Theorem~\ref{thm:qn}.
    
    \item We apply our norm estimation algorithm to count 3-term arithmetic progressions (3-APs) in dense subsets. Specifically, \( \|f\|_{U^2} \) provides an estimate of the 3-AP count with bounded variance. This yields a structured alternative to Grover-based quantum counting, with interpretability grounded in additive combinatorics. See Subsection~\ref{subsec:c} for details.
\end{itemize}

Finally, we show that our algorithms remain robust in the NISQ regime. Due to the shift-invariance of Gowers norms, our Fourier-sampling-based methods tolerate unknown, adversarial translation errors, and naturally operate under noisy oracle models such as \( \tfrac{1}{2}\text{BQP} \)~\cite{jacobs2024space}. This suggests a new family of quantum property testers and learning primitives built on Gowers norms, with practical potential for early quantum devices.

\paragraph{Outline.}  
In Section~\ref{sec:prelim}, we review Fourier analysis over finite abelian groups and define the Gowers norms. Section~\ref{sec:Q} introduces the quantum model and necessary primitives. Section~\ref{sec:algorithm} presents our core algorithms for norm estimation and structure detection. Section~\ref{sec:applications} covers applications to polynomial testing and arithmetic progression counting. In Section~\ref{sec:nisq}, we analyze noise-resilience under \( \frac{1}{2}\text{BQP} \) and \( \text{BQP}_\lambda^0 \). We conclude in Section~\ref{sec:conclusion} with a discussion of open problems and future directions.

\section{Gowers norms over Abelian group}\label{sec:prelim}
\subsection{Functions on Finite Abelian Groups and Fourier Analysis}

Let \( G \) be a finite abelian group written additively, with identity element \( 0 \). The set of complex-valued functions on \( G \) is denoted by \( \mathcal{F}(G) := \{ f: G \to \mathbb{C} \} \). The group of characters of \( G \), denoted \( \widehat{G} \), consists of all group homomorphisms \( \gamma: G \to \mathbb{C}^\times \) such that \( \gamma(x + y) = \gamma(x)\gamma(y) \) for all \( x, y \in G \). When \( G \) is finite abelian, the dual group \( \widehat{G} \) is isomorphic to \( G \) itself, i.e., \( \widehat{G} \cong G \), which allows us to represent the functions on \( G \) as linear combinations of the group characters.

The inner product on \( \mathcal{F}(G) \) is defined by:
\begin{equation}
\langle f, g \rangle := \mathbb{E}_{x \in G} f(x)\overline{g(x)} = \frac{1}{|G|} \sum_{x \in G} f(x)\overline{g(x)},
\end{equation}
which measures the similarity between two functions \( f \) and \( g \) over the group \( G \), taking into account the structure of \( G \) through the summation over its elements. The convention we use is the following, where the expectation denotes averaging over the finite abelian group $G$:
\begin{align}
    \mathbb{E}_{x \in G} &:= \frac{1}{|G|} \sum_{x \in G}, \\
    \mathbb{E}_{x,a \in G} &:= \frac{1}{|G|^2} \sum_{x, a \in G},
\end{align}
and similarly for more indices. This is called group averaging.

\subsubsection*{Fourier Expansion and Parseval's Identity}
The Fourier coefficient of \( f \in \mathcal{F}(G) \) at character \( \gamma \in \widehat{G} \) is defined as:
\begin{equation}
\widehat{f}(\gamma) := \mathbb{E}_{x \in G} f(x)\overline{\gamma(x)},
\end{equation}
which quantifies how much of the function \( f \) "matches" with the character \( \gamma \). This Fourier coefficient plays a central role in expressing \( f \) in terms of the group characters.

Thus, \( f \) admits the Fourier expansion:
\begin{equation}
f(x) = \sum_{\gamma \in \widehat{G}} \widehat{f}(\gamma) \gamma(x),
\end{equation}
which expresses the function \( f \) as a sum of its projections onto the characters of the group.

The Parseval identity holds:
\begin{equation}
\sum_{\gamma \in \widehat{G}} |\widehat{f}(\gamma)|^2 = \mathbb{E}_{x \in G} |f(x)|^2=\frac{1}{|G|}\sum_{x \in G} |f(x)|^2,
\end{equation}
which asserts that the total energy (or norm) of the function \( f \) over the group is equal to the sum of the energies of its Fourier components. This identity is a key result in Fourier analysis, indicating that no information is lost when decomposing the function into its Fourier components.

\subsubsection*{Convolution and Autocorrelation}
For \( f, g \in \mathcal{F}(G) \), the convolution product is defined as:
\begin{equation}
(f \ast g)(x) := \mathbb{E}_{y \in G} f(y)g(x - y),
\end{equation}
which is a natural way of combining the two functions \( f \) and \( g \) over the group. This operation measures how one function "overlaps" with the other as one is shifted over the group. The Fourier transform of the convolution satisfies:
\begin{equation}
\widehat{f \ast g}(\gamma) = \widehat{f}(\gamma) \cdot \widehat{g}(\gamma),
\end{equation}
which reflects that the convolution in the time domain corresponds to pointwise multiplication in the Fourier domain.

The autocorrelation of \( f \) at \( a \in G \) is defined as:
\begin{equation}
\mathrm{Corr}_f(a) := \mathbb{E}_{x \in G} f(x) \overline{f(x + a)},
\end{equation}
which measures the correlation of the function \( f \) with its shifted version \( f(x + a) \). Autocorrelation is useful for understanding the structure of a function by examining how much it "resembles" itself when shifted by a certain amount.

\subsection{Gowers Norm}
We first introduce the finite difference
\subsubsection*{Finite Differences}
The multiplicative finite difference of \( f \in \mathcal{F}(G) \) at \( a \in G \) is defined as:
\begin{equation}
\Delta_a f(x) := f(x)\overline{f(x + a)},
\end{equation}
which measures how much \( f(x) \) changes when shifted by \( a \) in the group \( G \). This operation helps quantify how the function \( f \) behaves under small shifts, and is central to defining the Gowers uniformity norms.

The \( k \)-fold iterated difference operator is defined as:
\begin{equation}
\Delta_{h_1, \dots, h_k} f(x) := \prod_{S \subseteq [k]} \mathcal{C}^{|S|} f\left(x + \sum_{i \in S} h_i\right),
\end{equation}
where \( \mathcal{C}^{|S|} \) represents the application of complex conjugation if \( |S| \) (the size of the subset \( S \)) is odd. This operator iteratively applies the finite difference across multiple dimensions, capturing higher-order correlations of the function over shifts by \( h_1, \dots, h_k \). 

This formulation generalizes the standard Gowers derivative in \( \mathbb{F}_2^n \), and leads to the definition of the Gowers uniformity norms \( \|f\|_{U^k} \) for arbitrary finite abelian groups. These norms are crucial for detecting structured patterns in functions defined over such groups, and they form the basis for many results in additive combinatorics and theoretical computer science.

Let \( G \) be a finite abelian group and let \( f: G \to \mathbb{C} \) be a bounded function. The Gowers \( U^k \) norm of \( f \) is defined by:
\begin{equation}
    \|f\|_{U^k}^{2^k} := \mathbb{E}_{x, h_1, \dots, h_k \in G} \prod_{\omega \in \{0,1\}^k} \mathcal{C}^{|\omega|} f\left(x + \omega \cdot h\right),
\end{equation}
where:
\begin{itemize}
    \item \( \omega = (\omega_1, \dots, \omega_k) \in \{0,1\}^k \) is a binary vector that determines the combination of shifts,
    \item \( \omega \cdot h := \sum_{j=1}^k \omega_j h_j \in G \) represents the weighted sum of shifts based on the components of \( \omega \),
    \item \( |\omega| := \sum_{j=1}^k \omega_j \) counts the number of nonzero components of \( \omega \),
    \item \( \mathcal{C}^{|\omega|} \) applies complex conjugation \( |\omega| \) times, i.e., conjugates if \( |\omega| \) is odd.
\end{itemize}
This norm captures the behavior of \( f \) under all combinations of \( k \)-dimensional shifts and complex conjugation, and is central to detecting whether a function has high uniformity with respect to the group structure.

\subsection*{Example: The \( U^2 \) Norm}
In the case \( k = 2 \), the \( U^2 \) norm is given by:
\begin{equation}\label{eq:U2}
    \|f\|_{U^2}^4 = \mathbb{E}_{x, h_1, h_2 \in G} \left[
        f(x)
        \overline{f(x + h_1)}
        \overline{f(x + h_2)}
        f(x + h_1 + h_2)
    \right]= \frac{1}{|G|^3}\sum_{x,h_1,h_2 \in G}f(x)
        \overline{f(x + h_1)}
        \overline{f(x + h_2)}
        f(x + h_1 + h_2).
\end{equation}
This expression measures the correlation of the function \( f \) at different points in \( G \) by evaluating it at several shifted positions. The goal is to determine whether these values exhibit any structure that suggests \( f \) is close to a polynomial of degree 1.

Let \( \widehat{G} \) denote the dual group of \( G \), and define the Fourier transform of \( f \) at \( \gamma \in \widehat{G} \) as:
\begin{align}
\hat{f}(\gamma) := \mathbb{E}_{x \in G} f(x) \overline{\gamma(x)}.
\end{align}
Then, the \( U^2 \) norm can be written as:
\begin{equation}
    \|f\|_{U^2}^4 = \sum_{\gamma \in \widehat{G}} |\hat{f}(\gamma)|^4,
\end{equation}
which expresses the norm as the sum of the Fourier coefficients of \( f \) raised to the fourth power. This representation is key for analyzing the structure of the function \( f \) in the Fourier domain.

\begin{proof}
We now provide a standard proof of the \( U^2 \) norm expression by expanding \( f \) in terms of its Fourier coefficients. Starting from the definition:
\begin{equation}
\|f\|_{U^2}^4 = \mathbb{E}_{x,a,b \in G} f(x)\overline{f(x+a)}\overline{f(x+b)}f(x+a+b),
\end{equation}
we expand each instance of \( f(x) \) using its Fourier representation:
\begin{align*}
f(x) &= \sum_{\alpha} \widehat{f}(\alpha) \alpha(x), \\
f(x+a) &= \sum_{\beta} \widehat{f}(\beta) \beta(x+a) = \sum_{\beta} \widehat{f}(\beta) \beta(x)\beta(a), \\
f(x+b) &= \sum_{\gamma} \widehat{f}(\gamma) \gamma(x+b) = \sum_{\gamma} \widehat{f}(\gamma) \gamma(x)\gamma(b), \\
f(x+a+b) &= \sum_{\delta} \widehat{f}(\delta) \delta(x+a+b) = \sum_{\delta} \widehat{f}(\delta) \delta(x)\delta(a)\delta(b).
\end{align*}
After substituting these expansions, the full expression becomes:
\begin{align}
\|f\|_{U^2}^4 &= \mathbb{E}_{x,a,b} 
\sum_{\alpha} \widehat{f}(\alpha) \alpha(x) 
\sum_{\beta} \overline{\widehat{f}(\beta)} \overline{\beta(x)} \overline{\beta(a)} \\
&\qquad \times \sum_{\gamma} \overline{\widehat{f}(\gamma)} \overline{\gamma(x)} \overline{\gamma(b)}
\sum_{\delta} \widehat{f}(\delta) \delta(x) \delta(a) \delta(b).
\end{align}
Simplifying the sums and using orthogonality of the characters, we obtain the final expression:
\begin{equation}
\|f\|_{U^2}^4 = \sum_{\delta \in \widehat{G}} |\widehat{f}(\delta)|^4.
\end{equation}
\end{proof}

However, the \( U^3 \) norm does not have as simple a relationship as the \( U^2 \) norm due to its involvement with higher-order correlations. For example, the \( U^3 \) norm of a function \( f \) on a finite abelian group \( G \) is defined as:
\begin{align}
\|f\|_{U^3}^8 := \mathbb{E}_{x,h_1,h_2,h_3 \in G} \prod_{\omega \in \{0,1\}^3} \mathcal{C}^{|\omega|} f\left(x + \omega_1 h_1 + \omega_2 h_2 + \omega_3 h_3\right),
\end{align}
which is a more complex form compared to the \( U^2 \) norm. It can be expressed in terms of Fourier coefficients as:
\begin{align}
\|f\|_{U^3}^8 = \sum_{\substack{
\gamma_1 + \gamma_2 = \gamma_3 + \gamma_4 \\
\gamma_1 + \gamma_5 = \gamma_3 + \gamma_7 \\
\gamma_5 + \gamma_6 = \gamma_7 + \gamma_8
}} \widehat{f}(\gamma_1) \overline{\widehat{f}(\gamma_2)} \overline{\widehat{f}(\gamma_5)} \widehat{f}(\gamma_6) \overline{\widehat{f}(\gamma_3)} \widehat{f}(\gamma_4) \widehat{f}(\gamma_7) \overline{\widehat{f}(\gamma_8)}.
\end{align}
This expression involves more intricate interactions between the Fourier coefficients and higher-order combinations of shifts. While \( U^2 \) norm deals with pairwise correlations, \( U^3 \) norm generalizes this to capture interactions involving three variables, making it more challenging to compute and analyze.

One of the key motivations for studying the Gowers norm is its ability to detect polynomial structure. Specifically, functions of the form \( f(x) = \omega^{P(x)} \), where \( P \) is a low-degree polynomial, have maximal Gowers norm.
\begin{theorem}[Direct Theorem: Phase Polynomials Maximize Gowers Norm]
Let \( P: \mathbb{F}_p^n \to \mathbb{C} \) be a polynomial of degree at most \( d-1 \), and let \( f(x) := \omega^{P(x)} \), where \( \omega = e^{2\pi i/p} \). Then
\begin{align}
\|f\|_{U^d} = 1.
\end{align}
\end{theorem}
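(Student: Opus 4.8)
The plan is to show that the integrand in the definition of \( \|f\|_{U^d}^{2^d} \) is identically equal to \( 1 \), so that the expectation is \( 1 \); combined with the general bound \( \|f\|_{U^d} \le \|f\|_\infty = 1 \) for \( f = \omega^{P} \) (which has modulus one pointwise), this shows \( \|f\|_{U^d} = 1 \) and that this is the maximal possible value. Concretely, recall that
\begin{equation}
\|f\|_{U^d}^{2^d} = \mathbb{E}_{x,h_1,\dots,h_d \in \mathbb{F}_p^n}\ \Delta_{h_1,\dots,h_d} f(x),
\qquad
\Delta_{h_1,\dots,h_d} f(x) = \prod_{S \subseteq [d]} \mathcal{C}^{|S|} f\!\left(x + \textstyle\sum_{i\in S} h_i\right),
\end{equation}
so it suffices to prove \( \Delta_{h_1,\dots,h_d} f(x) = 1 \) for every choice of \( x, h_1, \dots, h_d \).

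The key step is to pass from the multiplicative finite difference of \( f \) to the \emph{additive} discrete derivative of \( P \). Writing \( D_a P(x) := P(x+a) - P(x) \), one checks directly from \( f = \omega^{P} \) that \( \Delta_a f(x) = f(x)\overline{f(x+a)} = \omega^{-D_a P(x)} \), and iterating this identity (tracking that a subset \( S \) of odd size contributes a conjugation, exactly matching the sign pattern of the iterated additive derivative) yields
\begin{equation}
\Delta_{h_1,\dots,h_d} f(x) = \omega^{\,(-1)^d\, D_{h_1} D_{h_2}\cdots D_{h_d} P(x)}.
\end{equation}
Now I would invoke the standard fact that, over \( \mathbb{F}_p^n \), taking a discrete derivative \( D_a \) lowers the degree of a polynomial by at least one; hence applying \( d \) derivatives to a polynomial \( P \) of degree at most \( d-1 \) produces the zero function, i.e.\ \( D_{h_1}\cdots D_{h_d} P \equiv 0 \). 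Therefore \( \Delta_{h_1,\dots,h_d} f(x) = \omega^{0} = 1 \) identically, and the expectation above equals \( 1 \), giving \( \|f\|_{U^d} = 1 \).

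The main obstacle — really the only nontrivial point — is the degree-reduction lemma for discrete derivatives over the finite field \( \mathbb{F}_p^n \): unlike the characteristic-zero case, one must be careful about what "degree at most \( d-1 \)" means (total degree with each variable appearing to degree \( < p \), in the canonical representation) and verify that \( D_a \) strictly decreases it; I would handle this by expanding \( P \) in monomials and noting that \( D_a \) of a monomial of total degree \( e \) is a sum of monomials of total degree \( \le e-1 \) (with the top-degree terms cancelling), then induct on \( d \). A secondary, purely bookkeeping obstacle is checking that the conjugation exponents \( |S| \) in \( \Delta_{h_1,\dots,h_d} \) line up with the alternating signs in the expansion of \( D_{h_1}\cdots D_{h_d} \); this is an inclusion–exclusion identity that I would state and verify once. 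Finally, for the phrase "maximize the Gowers norm," I would remark that \( \|g\|_{U^d} \le \|g\|_{L^\infty} \) for all bounded \( g \) (a consequence of the Gowers–Cauchy–Schwarz inequality), so the value \( 1 \) attained here is the global maximum over functions bounded by \( 1 \).
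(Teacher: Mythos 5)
Your proposal is correct and follows essentially the same route as the paper's (very brief) sketch: both reduce the claim to showing that the $d$-fold multiplicative difference $\Delta_{h_1,\dots,h_d}\omega^{P}$ is identically $1$ because the $d$-fold additive discrete derivative of a degree-$(d-1)$ polynomial vanishes. You simply supply the details the paper omits — the identity $\Delta_{h_1,\dots,h_d} f = \omega^{(-1)^d D_{h_1}\cdots D_{h_d}P}$, the degree-reduction lemma over $\mathbb{F}_p^n$, and the sign bookkeeping — all of which are correct.
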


\begin{proof}[Sketch of proof, see \cite{tao2012higher, hatami2019higher}]
The Gowers \( U^d \) norm involves a \( 2^d \)-fold multiplicative finite difference operator. For a degree-\( (d-1) \) polynomial \( P \), this operator vanishes identically, i.e.,
\begin{align}
\Delta_{h_1,\dots,h_d} \omega^{P(x)} = 1
\end{align}
for all \( x, h_1, \dots, h_d \), since all higher-order discrete derivatives cancel out. As a result, the expectation in the definition of \( \|f\|_{U^d} \) equals 1.
\end{proof}

This property makes the Gowers norm a useful tool for detecting hidden polynomial structure in bounded functions. If a function has large Gowers norm, one may ask whether it correlates with a phase polynomial—a question addressed by the inverse theorem in the next section.

\subsection{Inverse Theorem}
We begin by discussing the Original Inverse Gowers Conjecture, which is central to understanding the connection between Gowers norms and polynomial structure:

\begin{conjecture}[Original Inverse Gowers Conjecture]
Let \( p \geq 2 \) be a fixed prime and \( d \geq 1 \). For any bounded function \( F : \mathbb{F}_p^n \to \mathbb{C} \) with \( \|F\|_{\infty} \leq 1 \), if the Gowers norm \( \|F\|_{U^{d+1}} \geq \varepsilon \), then there exists a degree-\( d \) polynomial \( P : \mathbb{F}_p^n \to \mathbb{F}_p \) such that
\begin{align}
|\langle F, e^{2\pi i P/p} \rangle| \geq \delta(p, d, \varepsilon).
\end{align}
\end{conjecture}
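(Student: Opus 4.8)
\medskip
\noindent\emph{Proof proposal.} The plan is to argue by induction on the degree $d$, following the finitary approach to the inverse theorem developed by Green, Tao, and Ziegler. The base case $d=1$ is the Fourier computation already recorded above: $\|F\|_{U^2}^4=\sum_{\gamma\in\widehat{G}}|\widehat F(\gamma)|^4$, and since $\|F\|_\infty\le 1$ Parseval gives $\sum_{\gamma}|\widehat F(\gamma)|^2\le 1$, so $\|F\|_{U^2}\ge\varepsilon$ forces $\max_\gamma|\widehat F(\gamma)|\ge\varepsilon^2$; the maximizing character has the form $\gamma(x)=e^{2\pi i\,\ell(x)/p}$ for a linear form $\ell$, which gives $\delta(p,1,\varepsilon)=\varepsilon^2$. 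For the inductive step I would use the recursive identity
\begin{align}
\|F\|_{U^{d+1}}^{2^{d+1}}=\mathbb{E}_{h\in\mathbb{F}_p^n}\,\|\Delta_h F\|_{U^{d}}^{2^{d}},\qquad \Delta_h F(x)=F(x)\overline{F(x+h)}.
\end{align}
A lower bound $\|F\|_{U^{d+1}}\ge\varepsilon$ then forces $\|\Delta_h F\|_{U^d}$ to be bounded below (by some $\varepsilon'=\varepsilon'(\varepsilon,d)$) for a set of shifts $h$ of density bounded below in terms of $\varepsilon$; by the inductive hypothesis, each such $\Delta_h F$ correlates with a phase $e^{2\pi i P_h/p}$ for some $P_h:\mathbb{F}_p^n\to\mathbb{F}_p$ of degree $d-1$.

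The heart of the argument is to show that the family $\{P_h\}$ depends on $h$ in a controlled, low-degree way, so that it can be ``integrated'' into a single polynomial $P$ of degree $d$ with $\partial_h P\approx P_h$. Concretely, I would apply the Gowers--Cauchy--Schwarz inequality together with a Balog--Szemer\'edi--Gowers--type step to deduce that, for many additive quadruples of shifts, the second derivative $\partial_{h'}P_h$ agrees with $\partial_{h}P_{h'}$ up to a polynomial of degree $d-2$; this symmetry, combined with the near-additivity $P_{h+h'}\approx P_h+P_{h'}$ modulo degree-$(d-2)$ terms, is precisely what allows one to symmetrize and integrate. A final Cauchy--Schwarz, together with the $d=1$ case to pin down the remaining linear ambiguity, then upgrades the pointwise correlations of the $\Delta_h F$ into a genuine correlation of $F$ itself with $e^{2\pi i P/p}$. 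Making each step rigorous needs the supporting machinery of higher-order Fourier analysis over $\mathbb{F}_p^n$: the arithmetic regularity lemma / polynomial factor decompositions, and the equidistribution theory of polynomial maps, so that ``$\|\Delta_h F\|_{U^d}$ large'' is converted into structural information about $P_h$ that is uniform enough in $h$ to be integrated.

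I expect the main obstacle to be the integration step in small characteristic, $p\le d$: passing from $\partial_h P$ back to $P$ involves dividing by a factor of $d!$, which is impossible modulo $p$ when $p\le d$. In fact the statement as literally written is \emph{false} in this regime --- the well-known counterexample being $F(x)=(-1)^{S_4(x)}$ on $\mathbb{F}_2^n$, where $S_4$ is the degree-$4$ elementary symmetric polynomial, so that $F$ depends only on the Hamming weight of $x$; this $F$ has $\|F\|_{U^4}$ bounded below by an absolute constant yet has no nontrivial correlation with any cubic $\mathbb{F}_2$-phase. The correct resolution, due to Tao--Ziegler and --- in the effective, finitary form needed for the algorithmic applications in this paper --- to the quantitative inverse theorems of \cite{milicevic2022quantitative,milicevic2024quasipolynomial}, is to weaken the conclusion: one replaces the classical polynomial $P:\mathbb{F}_p^n\to\mathbb{F}_p$ by a \emph{nonclassical} polynomial $P:\mathbb{F}_p^n\to\mathbb{R}/\mathbb{Z}$ whose $(d+1)$st discrete derivatives vanish (equivalently, allowing coefficients in $\tfrac1{p^k}\mathbb{Z}/\mathbb{Z}$); when $p>d$ the nonclassical polynomials coincide with the classical ones and the statement holds verbatim. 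A further, quantitatively much harder point is making $\delta(p,d,\varepsilon)$ explicit: the original ergodic-theoretic proof, via the Furstenberg correspondence and the structure theory of characteristic factors, is purely qualitative, and only the recent quantitative machinery yields the quasipolynomial dependence that the later parts of this paper rely on.
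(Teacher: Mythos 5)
Your proposal is in an unusual position relative to the paper: the statement you were asked to prove is presented there as a \emph{conjecture}, and the paper supplies no proof of it. Indeed, the surrounding discussion in the paper itself concedes that the conjecture fails in its original form for $d\geq 3$ in small characteristic, because of generalized (nonclassical) polynomials, and then retreats to the high-characteristic inverse theorem (Theorem~\ref{thm:inverse}) as the usable substitute. Your write-up reaches exactly the same verdict, and with more precision than the paper: you name the standard counterexample $F=(-1)^{S_4}$ over $\mathbb{F}_2^n$, you identify the failure point as the division by $d!$ in the integration step when $p\leq d$, and you state the correct repairs (assume $p>d$, or allow nonclassical polynomials valued in $\mathbb{R}/\mathbb{Z}$). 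So there is no discrepancy to flag between your argument and the paper's --- the paper has no argument --- and your diagnosis that the statement is not provable as literally written is the right one.

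On the positive content of your sketch: the base case $d=1$ is complete and gives $\delta(p,1,\varepsilon)=\varepsilon^2$, matching what the paper asserts. The inductive roadmap (the recursion $\|F\|_{U^{d+1}}^{2^{d+1}}=\mathbb{E}_h\|\Delta_hF\|_{U^d}^{2^d}$, correlation of many $\Delta_hF$ with degree-$(d-1)$ phases $P_h$, approximate additivity and symmetry of $h\mapsto P_h$, then integration) is the correct skeleton of the Green--Tao--Ziegler/Tao--Ziegler proofs, but as you acknowledge it is only a skeleton: the ``controlled dependence of $P_h$ on $h$'' and the integration step are where all the difficulty lives, and no elementary argument is known. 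Treat this item as a survey statement with known counterexamples in low characteristic, not as a theorem; the honest citable results are the $d=1,2$ cases and Theorem~\ref{thm:inverse} for $p>d$, which is what the rest of the paper actually uses.
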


For \( d = 1 \), the inverse theorem has been proved (see \cite{tao2012higher}), and it states that if \( \|F\|_{U^2} \geq \varepsilon \), then there exists a linear polynomial \( P \) such that:
\begin{align}
|\langle F, e^{2\pi i P/p} \rangle| \geq \varepsilon^2.
\end{align}

For \( d = 2 \), a similar result holds, as proven in \cite{hatami2019higher}, which states that if \( \|F\|_{U^3} \geq \varepsilon \), then there exists a degree-2 polynomial \( P \) such that:
\begin{align}
|\langle F, e^{2\pi i P/p} \rangle| \geq \delta.
\end{align}
Here, \( \delta = \delta(p, \varepsilon) > 0 \) is a constant depending on \( p \) and \( \varepsilon \).

However, when \( d = 3 \), the original Gowers conjecture faces complications. Researchers have found that for \( d = 3 \), there exist generalized polynomials that satisfy the finite difference structure but are not standard polynomials. These generalized polynomials do not fit into the classical framework that the conjecture is based on, thus leading to a failure of the conjecture for \( d = 3 \).

This issue arises because generalized polynomials capture higher-order correlations that cannot be described by simple polynomials of degree \( d-1 \). Therefore, the Gowers conjecture does not hold in its original form for \( d = 3 \) and beyond. To handle this issue, we just need to put such generalized polynomial into the set or applies when the characteristic of the field is greater than \( d \), and provides a clean solution that circumvents the need for generalized polynomials.

Finally we have the following theorem (Theorem 1.5.3 in \cite{tao2012higher}):
\begin{theorem}[1\% Inverse Theorem for \( U_{d+1} \) Norm (Theorem 1.5.3 in \cite{tao2012higher})] \label{thm:inverse}
Let \( \mathbb{F} \) be a field of characteristic greater than \( d \), and let \( V = \mathbb{F}_p^n \). Suppose \( f : V \to \mathbb{C} \) satisfies \( \|f\|_\infty \leq 1 \) and \( \|f\|_{U_{d+1}} \geq \varepsilon \). Then there exists a degree-\( d \) polynomial \( P : V \to \mathbb{F}_p \) such that
\begin{align}
|\langle f, \omega^{P} \rangle| \geq \delta,
\end{align}
for some constant \( \delta = \delta(p, d, \varepsilon) > 0 \).
\end{theorem}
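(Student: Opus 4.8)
The plan is to prove Theorem~\ref{thm:inverse} by induction on the degree $d$, which is the standard architecture of the inverse theory for Gowers norms. The base case $d=1$ is exactly the Fourier-analytic statement already established above: from $\|f\|_{U^2} \geq \varepsilon$ one gets $\sum_{\gamma \in \widehat V} |\widehat f(\gamma)|^4 \geq \varepsilon^4$, and since $\sum_\gamma |\widehat f(\gamma)|^2 \leq \|f\|_\infty^2 \leq 1$ by Parseval, some character $\gamma_0$ satisfies $|\widehat f(\gamma_0)| = |\langle f, \gamma_0 \rangle| \geq \varepsilon^2$; writing $\gamma_0(x) = \omega^{\ell(x)}$ for an $\mathbb{F}_p$-linear $\ell$ gives the claim with $\delta = \varepsilon^2$. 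For the inductive step we assume the theorem at degree $d-1$ and prove it at degree $d$.

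The first move is to pass to first derivatives. Using the identity
\[
\|f\|_{U^{d+1}}^{2^{d+1}} = \mathbb{E}_{h \in V}\, \|\Delta_h f\|_{U^d}^{2^d} \geq \varepsilon^{2^{d+1}},
\]
a Markov-type averaging shows that for an $\varepsilon^{2^{d+1}}/2$ fraction of $h \in V$ one has $\|\Delta_h f\|_{U^d} \geq \varepsilon'$ with $\varepsilon' = \varepsilon'(p,d,\varepsilon) > 0$; call such $h$ \emph{good}. Applying the inductive hypothesis to $\Delta_h f$ for each good $h$ produces a polynomial $P_h : V \to \mathbb{F}_p$ of degree at most $d-1$ with $|\langle \Delta_h f,\, \omega^{P_h} \rangle| \geq \delta'$. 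The real content of the argument is to show that this fibrewise family of frequencies $h \mapsto P_h$ can be stitched into a single global degree-$d$ polynomial.

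The heart of the proof is the linearization and symmetrization of the frequency map (following, in Tao's book, the machinery of approximate polynomials and polynomial factors). Expanding the correlation $\mathbb{E}_x f(x)\overline{f(x+h)}\,\overline{\omega^{P_h(x)}}$ and applying repeated Cauchy--Schwarz in the shift variables (the ``$\partial$-trick'') shows that the top-degree, degree-$(d-1)$, component of $P_h$ is approximately additive: for many additive quadruples $h_1+h_2 = h_3+h_4$ among good $h$ one has $P_{h_1}+P_{h_2} \approx P_{h_3}+P_{h_4}$ modulo lower-degree corrections. Feeding this into the Balog--Szemer\'edi--Gowers theorem and then a Freiman-type structure theorem over $\mathbb{F}_p^n$ upgrades approximate additivity to an honest affine-linear dependence $h \mapsto \partial_h Q$ for a single polynomial $Q$ of degree $d$, valid up to lower-order error on a large set of $h$. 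One then checks that the bilinear expression $(h,h') \mapsto \partial_{h'} P_h$ is approximately symmetric; this symmetry is precisely the cocycle/integrability condition that lets one antidifferentiate and produce $Q$ with $P_h \equiv \partial_h Q$ modulo degree $\le d-2$ for many good $h$.

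To conclude, note that $\Delta_h(f\,\overline{\omega^Q}) = \Delta_h f \cdot \omega^{\partial_h Q}$, so for each good $h$ the function $\Delta_h(f\,\overline{\omega^Q})$ correlates with a phase polynomial of degree at most $d-2$, hence has $U^{d-1}$ norm bounded below; averaging over $h$ gives $\|f\,\overline{\omega^Q}\|_{U^d} \geq \varepsilon''$. Applying the inductive hypothesis at degree $d$ to $f\,\overline{\omega^Q}$ yields a polynomial $R$ of degree at most $d-1$ with $|\langle f\,\overline{\omega^Q},\, \omega^R\rangle| \geq \delta''$, i.e.\ $|\langle f,\, \omega^{Q+R}\rangle| \geq \delta''$ and $\deg(Q+R) \le d$, completing the induction with $\delta := \delta''$. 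The hypothesis $\mathrm{char}(\mathbb{F}) > d$ enters precisely here: it guarantees that ordinary Taylor-type calculus for polynomials is available (derivatives lower the degree exactly and degree-$\le d$ polynomials admit well-behaved antiderivatives), so no non-classical/generalized polynomials intrude and the conclusion stays within the classical class. The main obstacle — and the reason $\delta(p,d,\varepsilon)$ is at best quasipolynomial in $\varepsilon$ — is propagating the ``for many $h$'' quantifiers through the Cauchy--Schwarz steps without the error terms compounding, and extracting exact linear structure from the additive-combinatorial input; this is the technically heaviest part of the whole argument.
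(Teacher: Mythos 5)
This theorem is not proved in the paper at all: it is imported as a black box, cited to Theorem 1.5.3 of Tao's book, whose underlying proof (Bergelson--Tao--Ziegler / Tao--Ziegler) goes through an ergodic-theoretic correspondence principle and yields no effective bounds. Your proposal instead sketches the additive-combinatorial induction in the style of Green--Tao's $U^3$ argument and its quantitative descendants (Gowers--Mili\'cevi\'c and later work), so even at the level of strategy you are following a genuinely different route from the one behind the citation. Your base case and the two framing identities ($\|f\|_{U^{d+1}}^{2^{d+1}} = \mathbb{E}_h \|\Delta_h f\|_{U^d}^{2^d}$ and the phase-invariance used at the end) are correct, and the overall architecture --- derivative, fibrewise inverse theorem, linearize the frequency map, integrate, strip off $Q$ and reapply the inductive hypothesis --- is the right one. (Minor slip: the final application is the inductive hypothesis at degree $d-1$, applied to the $U^d$ norm of $f\,\overline{\omega^Q}$, not ``at degree $d$.'')

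The genuine gap is that the entire middle of the argument --- which is essentially all of the difficulty of the theorem --- is named rather than executed. ``Repeated Cauchy--Schwarz shows the top part of $P_h$ is approximately additive,'' ``Balog--Szemer\'edi--Gowers and a Freiman-type theorem upgrade this to exact linearity,'' and ``the bilinear form is approximately symmetric, which lets one antidifferentiate'' are each multi-page arguments with serious quantitative bookkeeping, and the symmetry/integration step in particular is where the high-characteristic hypothesis and the threat of non-classical polynomials must actually be confronted (your one-sentence remark that $\mathrm{char}(\mathbb{F})>d$ ``makes Taylor calculus available'' does not discharge this). You also need, and do not supply, a measurable selection of $h \mapsto P_h$ before any additivity statement can even be formulated. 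As a roadmap your proposal is faithful to the literature, but as a proof it establishes only the base case; everything from the approximate-additivity claim onward is asserted. Since the paper itself treats the theorem as external, the honest course here is either to cite it as the paper does or to commit to reproducing the full Green--Tao/Gowers--Mili\'cevi\'c machinery, which your sketch does not do.
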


This theorem is crucial because it connects the Gowers \( U^k \) norm with polynomials of degree \( k-1 \) in domains where the characteristic of the field is greater than \( d \). It simplifies the analysis by avoiding the complications of generalized polynomials and allows us to work directly with standard polynomials.

By using Theorem \ref{thm:inverse}, we can avoid the problems associated with generalized polynomials. If we assume that the characteristic of the field is greater than \( d \), we can apply this result directly and detect polynomial structure in functions with large Gowers norms without worrying about the complications that arise for \( d \geq 3 \). Thus, for fields with characteristic greater than \( d \), we can rely on Theorem \ref{thm:inverse} to handle the inverse problem effectively, which also has deep implications for both classical and quantum algorithms.



\section{Quantum Computation}\label{sec:Q}

We now describe the key quantum operations used in our algorithm. These primitives are standard in the study of quantum algorithms over finite abelian groups. For a more detailed exposition, we refer the reader to Nielsen and Chuang~\cite{NC10}, Høyer~\cite{Hoyer1997}.

Let \( G \) be a finite abelian group of order \( N \). We define the Hilbert space
\begin{align}
\mathcal{H}_G := \text{span}_\mathbb{C} \{ \ket{g} : g \in G \},
\end{align}
with orthonormal basis states indexed by elements \( g \in G \). A pure quantum state over \( G \) is a unit vector in \( \mathcal{H}_G \) and can be written as
\begin{align}
\ket{\psi} = \sum_{g \in G} \alpha_g \ket{g}, \quad \text{with } \sum_{g \in G} |\alpha_g|^2 = 1.
\end{align}

We now define the three key operations that act on these states.

\begin{definition}[Phase Oracle \( U_f \)]
Let \( f : G \to \mathbb{C} \) be a function with \( |f(x)| = 1 \) for all \( x \in G \). The phase oracle is the unitary operator defined by:
\begin{align}
U_f \ket{x} = f(x) \ket{x}.
\end{align}
This oracle is commonly assumed in the quantum query model. When \( f(x) = \omega^{P(x)} \) for a polynomial \( P : G \to \mathbb{F}_p \), the oracle can be implemented efficiently using modular arithmetic over \( \mathbb{F}_p \).
\end{definition}

\begin{definition}[Controlled Group Addition \( \mathrm{CADD}_{x \to y} \)]
Let \( G \) be a finite abelian group written additively. The controlled group addition gate is defined by:
\begin{align}
\mathrm{CADD}_{x \to y} \ket{x} \ket{y} = \ket{x} \ket{x + y}.
\end{align}
This gate performs a reversible addition from the control register \( x \) to the target register \( y \). Its inverse is given by:
\begin{align}
\mathrm{CADD}^{-1}_{x \to y} \ket{x} \ket{y} = \ket{x} \ket{y - x}.
\end{align}
When \( G = \mathbb{F}_2^n \), this gate reduces to a multi-qubit CNOT gate.
\end{definition}

\begin{definition}[Quantum Fourier Transform \( \mathrm{QFT}_G \)]
Let \( G \) be a finite abelian group and \( \widehat{G} \) its dual group. The quantum Fourier transform over \( G \) is the unitary transformation defined by:
\begin{align}
\mathrm{QFT}_G \ket{x} := \frac{1}{\sqrt{|G|}} \sum_{\gamma \in \widehat{G}} \gamma(x) \ket{\gamma},
\end{align}
where \( \gamma(x) \) is the character evaluation of \( x \in G \). The inverse Fourier transform is given by:
\begin{align}
\mathrm{QFT}_G^{-1} \ket{\gamma} = \frac{1}{\sqrt{|G|}} \sum_{x \in G} \overline{\gamma(x)} \ket{x}.
\end{align}
\end{definition}

\paragraph{Special Case: \( G = \mathbb{F}_2^n \)}

In this case, \( \widehat{G} \cong G \), and characters are given by \( \gamma(x) = (-1)^{\langle \gamma, x \rangle} \). The Fourier transform becomes the Hadamard transform:
\begin{align}
\mathrm{QFT}_{\mathbb{F}_2^n} \ket{x} = \frac{1}{\sqrt{2^n}} \sum_{\gamma \in \mathbb{F}_2^n} (-1)^{\langle \gamma, x \rangle} \ket{\gamma},
\end{align}
which equals \( H^{\otimes n} \), the tensor product of \( n \) Hadamard gates.

\paragraph{General Case: \( G = \mathbb{F}_p^n \)}

For general \( p \), characters are defined as \( \gamma(x) = \omega^{\langle \gamma, x \rangle} \), where \( \omega = e^{2\pi i/p} \) is a primitive \( p \)-th root of unity. The Fourier transform becomes:
\begin{align}
\mathrm{QFT}_{\mathbb{F}_p^n} \ket{x} = \frac{1}{\sqrt{p^n}} \sum_{\gamma \in \mathbb{F}_p^n} \omega^{\langle \gamma, x \rangle} \ket{\gamma},
\end{align}
and can be implemented efficiently using known methods from quantum modular arithmetic~\cite{NC10,Hoyer1997}.

\subsection{Summary of Quantum Primitives Used}

Throughout this paper, we assume the availability of the following quantum operations:

\begin{itemize}
    \item Phase oracle \( U_f \): \( \ket{x} \mapsto f(x) \ket{x} \)
    \item Controlled group addition gate \( \mathrm{CADD}_{x \to y} \): \( \ket{x} \ket{y} \mapsto \ket{x} \ket{x + y} \)
    \item Quantum Fourier transform \( \mathrm{QFT}_G \): \( \ket{x} \mapsto \frac{1}{\sqrt{|G|}} \sum_{\gamma \in \widehat{G}} \gamma(x) \ket{\gamma} \)
\end{itemize}

These primitives provide the foundation for Gowers-norm estimation, testing polynomial structure, and implementing quantum learning algorithms over \( \mathbb{F}_p^n \).

\section{Quantum Estimation of Gowers Norms}\label{sec:algorithm}

We now describe quantum algorithms for estimating the Gowers \( U^d \) norm, a key quantity in higher-order Fourier analysis. These algorithms rely on phase oracles and group structure over finite abelian groups.

\subsection{Quantum Algorithm to Estimate Gowers \( U^2 \) Norm over Finite Abelian Groups}

Let \( G \) be a finite abelian group of order \( N = |G| \), and let \( f: G \to \mathbb{C} \) and $|f|=1$. The Gowers \( U^2 \) norm of \( f \) is defined as
\begin{equation}
\label{eq:u2-def}
\|f\|_{U^2}^4 := \mathbb{E}_{x,a,b \in G} \left[ f(x) \overline{f(x+a)} \overline{f(x+b)} f(x+a+b) \right] = \mathbb{E}_{x,a,b} \left[ \Delta_{a,b}f(x) \right],
\end{equation}
where \( \Delta_{a,b}f(x) := f(x) \overline{f(x+a)} \overline{f(x+b)} f(x+a+b) \).

\begin{theorem}
Let \( G \) be a finite abelian group and \( f: G \to \mathbb{C} \). There exists a quantum algorithm that uses \( 4 \) queries to the phase oracle \( U_f \) (or its adjoint \( U_f^\dagger \)), and performs \( 3 \) quantum Fourier transforms over \( G \), such that the probability of measuring the all-zero state \( \ket{0}^{\otimes 3} \) equals \( \|f\|_{U^2}^8 \). The overall gate complexity is
\begin{align}
\mathcal{O}\left( 4 C_f + 3 \cdot \log^2 |G| \right),
\end{align}
where \( C_f \) denotes the cost of implementing the oracle \( U_f \).
\end{theorem}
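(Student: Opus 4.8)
The plan is to write the circuit down explicitly and then read off the amplitude of the all-zero outcome by character orthogonality; there is no deep obstacle, so the content lies in the circuit synthesis and the resource bookkeeping. Use three registers $X,H_1,H_2$, each a copy of $\mathcal{H}_G$, and prepare the uniform superposition $\tfrac{1}{|G|^{3/2}}\sum_{x,h_1,h_2\in G}\ket{x}\ket{h_1}\ket{h_2}$ by applying $\mathrm{QFT}_G$ to $\ket{0}$ in each register; these are the three Fourier transforms counted in the statement (over $\mathbb{F}_2^n$ this is simply the layer $H^{\otimes n}$ on each register). The design point that keeps the query count at $4$ rather than $8$, and the number of registers at $3$, is that the four arguments $x,\ x+h_1,\ x+h_2,\ x+h_1+h_2$ of $f$ occurring in $\Delta_{h_1,h_2}f(x)$ can all be produced inside the single register $X$ by routing with $\mathrm{CADD}$ gates. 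Concretely, apply in order: $U_f$ on $X$; $\mathrm{CADD}_{H_1\to X}$; $U_f^\dagger$ on $X$; $\mathrm{CADD}_{H_2\to X}$; $U_f$ on $X$; $\mathrm{CADD}^{-1}_{H_1\to X}$; $U_f^\dagger$ on $X$; $\mathrm{CADD}^{-1}_{H_2\to X}$. The two $\mathrm{CADD}$ pairs cancel, so $X$ is returned to the value $x$ with no residual entanglement, while the accumulated phase is exactly $f(x)\,\overline{f(x+h_1)}\,\overline{f(x+h_2)}\,f(x+h_1+h_2)=\Delta_{h_1,h_2}f(x)$. This step uses precisely $4$ oracle calls ($2$ of $U_f$ and $2$ of $U_f^\dagger$) and $4$ controlled additions.

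After this step the state is $\ket{\psi}=\tfrac{1}{|G|^{3/2}}\sum_{x,h_1,h_2\in G}\Delta_{h_1,h_2}f(x)\,\ket{x}\ket{h_1}\ket{h_2}$. Measure each register in the Fourier basis $\{\mathrm{QFT}_G\ket{g}:g\in G\}$, declaring the outcome \emph{all-zero} when every register yields its $0$-th element $\mathrm{QFT}_G\ket{0}=\tfrac{1}{\sqrt{|G|}}\sum_{h\in G}\ket{h}$. Since this state has overlap $|G|^{-1/2}$ with every computational basis vector, the amplitude of the all-zero outcome equals $\tfrac{1}{|G|^{3/2}}\cdot|G|^{-3/2}\sum_{x,h_1,h_2}\Delta_{h_1,h_2}f(x)=\mathbb{E}_{x,h_1,h_2\in G}\big[\Delta_{h_1,h_2}f(x)\big]$, which by the definition \eqref{eq:u2-def} of the $U^2$ norm equals $\|f\|_{U^2}^4$. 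Since $\|f\|_{U^2}^4=\sum_{\gamma\in\widehat{G}}|\widehat{f}(\gamma)|^4\in[0,1]$ (using $|f|=1$, which also makes $U_f$ unitary), the probability of the all-zero outcome is $\big(\|f\|_{U^2}^4\big)^2=\|f\|_{U^2}^8$, a legitimate probability, exactly as claimed.

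For the gate complexity: the circuit makes $4$ queries to $U_f/U_f^\dagger$, contributing $4C_f$; it performs $3$ transforms $\mathrm{QFT}_G$, and for $G=\mathbb{F}_p^n$ (or any $G$ admitting an efficient QFT) each decomposes as $\mathrm{QFT}_{\mathbb{Z}_p}^{\otimes n}$ and is implementable with $O(\log^2|G|)$ gates by the standard construction, contributing the $3\log^2|G|$ term; the $4$ $\mathrm{CADD}$ gates are coordinatewise additions of cost $O(\log|G|)$ each and are absorbed. This yields the stated $\mathcal{O}\!\left(4C_f+3\log^2|G|\right)$.

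The statement is essentially a verification, so I do not anticipate a genuine obstacle. The one place needing care is the uncomputation: the $\mathrm{CADD}/\mathrm{CADD}^{-1}$ schedule must restore the work register $X$ exactly to $\ket{x}$, since any residual entanglement between $X$ and the accumulated phase would break the tensor factorization that produces the amplitude $\mathbb{E}[\Delta_{h_1,h_2}f(x)]$. A secondary, cosmetic point concerns the accounting of Fourier transforms: above, the three are the preparation layer and the read-out is a Fourier-basis measurement; equivalently the read-out can be realized by three further $\mathrm{QFT}_G^{-1}$ followed by a computational-basis measurement (the $\mathbb{F}_2^n$ pattern of $H^{\otimes n}$ to prepare and $H^{\otimes n}$ to read out), which changes only the constant multiplying $\log^2|G|$, not the success probability.
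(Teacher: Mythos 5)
Your proposal is correct and follows essentially the same route as the paper: prepare the uniform superposition over $(x,h_1,h_2)$, accumulate the phase $\Delta_{h_1,h_2}f(x)$ with four oracle calls interleaved with controlled group additions, then Fourier-transform and read off the all-zero amplitude $\mathbb{E}_{x,h_1,h_2}[\Delta_{h_1,h_2}f(x)]=\|f\|_{U^2}^4$, whose square gives the stated probability. The only differences are cosmetic: you route all four evaluations through the single register $X$ in a Gray-code order (arguably cleaner than the paper's shuttling of $x+a$ and $x+a+b$ into the other registers), and you place the three counted QFTs at state preparation with a Fourier-basis readout rather than at the end.
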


\begin{proof}
To estimate \( \|f\|_{U^2} \), we begin with the uniform superposition:
\begin{align}
\ket{\psi_0} = \frac{1}{N^{3/2}} \sum_{x,a,b \in G} \ket{x} \ket{a} \ket{b}.
\end{align}
We apply a sequence of operations as follows:
\begin{enumerate}
\item Apply the phase oracle \( U_f \) to the first register: \( \ket{x} \mapsto f(x) \ket{x} \).
    \item Apply \( \mathrm{CADD}_{x \to a} \): \( \ket{x}\ket{a} \mapsto \ket{x}\ket{x+a} \).
    \item Apply \( U_f^\dagger \) to the second register: \( \ket{x+a} \mapsto \overline{f(x+a)} \ket{x+a} \).
    \item Apply inverse \( \mathrm{CADD}_{x \to a} \): undo addition.
    \item Apply \( \mathrm{CADD}_{x \to b} \): \( \ket{x}\ket{b} \mapsto \ket{x}\ket{x+b} \).
    \item Apply \( U_f^\dagger \) to the second register: \( \ket{x+b} \mapsto \overline{f(x+b)} \ket{x+b} \).
    \item Apply \( \mathrm{CADD}_{x \to a+b} \) to a fresh third register (or re-use): result is \( \ket{x+a+b} \mapsto f(x+a+b)\ket{x+a+b} \).
    \item Undo all controlled additions to return to \( \ket{x}\ket{a}\ket{b} \).
\end{enumerate}

After these operations, the overall global phase is given by
\begin{align}
\Delta_{a,b} f(x) = f(x) \cdot \overline{f(x+a)} \cdot \overline{f(x+b)} \cdot f(x+a+b),
\end{align}
and the resulting quantum state is
\begin{align}
\ket{\psi} = \frac{1}{N^{3/2}} \sum_{x,a,b \in G} \Delta_{a,b}f(x) \ket{x} \ket{a} \ket{b}.
\end{align}

We then apply the quantum Fourier transform \( \mathrm{QFT}_G^{\otimes 3} \), which acts as
\begin{align}
\ket{x} \ket{a} \ket{b} \mapsto \frac{1}{N^{3/2}} \sum_{\gamma_1, \gamma_2, \gamma_3 \in \widehat{G}} \gamma_1(x) \gamma_2(a) \gamma_3(b) \ket{\gamma_1} \ket{\gamma_2} \ket{\gamma_3},
\end{align}
where each \( \gamma_i \in \widehat{G} \) is a character, i.e., a homomorphism from \( G \) to \( \mathbb{C}^\times \).

The final quantum state becomes (recall Equation \ref{eq:U2})
\begin{align}
\ket{\psi'} =   \sum_{\gamma_1, \gamma_2, \gamma_3 \in \widehat{G}} \left( \mathbb{E}_{x,a,b} \left[ \Delta_{a,b}f(x) \cdot \gamma_1(x) \gamma_2(a) \gamma_3(b) \right] \right) \ket{\gamma_1} \ket{\gamma_2} \ket{\gamma_3}.
\end{align}

In particular, the amplitude at the trivial character \( \gamma_i = 0 \) for all \( i \) is
\begin{align}
\braket{0^{\otimes 3} | \psi'} = \mathbb{E}_{x,a,b} \Delta_{a,b}f(x) = \|f\|_{U^2}^4,
\end{align}
so the measurement probability of observing \( \ket{0}^{\otimes 3} \) is
\begin{align}
\Pr[0^{\otimes 3}] = \left( \mathbb{E}_{x,a,b} \Delta_{a,b}f(x) \right)^2 = \|f\|_{U^2}^8.
\end{align}

Therefore, by repeating the measurement sufficiently many times and computing the empirical frequency of the all-zero outcome, one can estimate \( \|f\|_{U^2} \) up to additive error \( \varepsilon \) using \( O(1/\varepsilon^2) \) repetitions. Let us count the number of cost. The phase oracle \( U_f \) is queried $4$ times. Each addition operation is group addition, assumed efficient (e.g., polynomial in \( \log \max(N_i) \)). The quantum Fourier transform over \( G^{\otimes 3} \) \cite{childs2010quantum} has gate complexity \( O(\log^2 \max(N_i) ) \) when \( G \cong \mathbb{Z}_{N_1} \times \cdots \times \mathbb{Z}_{N_k} \), using efficient QFT circuits for each component.
\end{proof}\subsection{Generalization to Gowers \( U^d \) Norm}

We now generalize the above construction to the Gowers \( U^d \) norm.

\begin{theorem}[Quantum Algorithm for Gowers \( U^d \) Norm]\label{thm:qk}
Let \( f: G \to \mathbb{C} \) with $|f|=1$ and \( d \geq 2 \). There exists a quantum algorithm that
\begin{itemize}
    \item makes \( 2^d \) queries to the phase oracle \( U_f \) or its conjugate,
    \item performs \( d+1 \) quantum Fourier transforms over \( G \),
    \item outputs a quantum state such that the probability of measuring \( \ket{0}^{\otimes (d+1)} \) is exactly \( \|f\|_{U_d}^{2^{d+1}} \).
\end{itemize}
The total gate complexity is
\begin{align}
\mathcal{O}\left( 2^d C_f + (d+1) \cdot \log^2 |G| \right).
\end{align}
\end{theorem}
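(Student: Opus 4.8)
The plan is to mimic the $U^2$ construction but iterate it across all $2^d$ sign patterns $\omega \in \{0,1\}^d$, using $d$ shift registers $h_1,\dots,h_d$ plus the base register $x$. First I would prepare the uniform superposition $\ket{\psi_0} = N^{-(d+1)/2}\sum_{x,h_1,\dots,h_d} \ket{x}\ket{h_1}\cdots\ket{h_d}$. Then, for each $\omega \in \{0,1\}^d$ in some fixed order, I would (i) use a sequence of $\mathrm{CADD}$ gates controlled by the $h_i$ registers with $\omega_i = 1$ to write the value $x + \omega\cdot h$ into a work register (or temporarily into one of the shift registers), (ii) apply $U_f$ if $|\omega|$ is even and $U_f^\dagger$ if $|\omega|$ is odd, thereby picking up the phase $\mathcal{C}^{|\omega|} f(x+\omega\cdot h)$, and (iii) undo the $\mathrm{CADD}$ gates to restore the registers. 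Since each $\omega$ contributes exactly one oracle call, this uses $2^d$ queries total.

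After looping over all $\omega$, the accumulated global phase on the basis state $\ket{x}\ket{h_1}\cdots\ket{h_d}$ is precisely $\prod_{\omega\in\{0,1\}^d} \mathcal{C}^{|\omega|} f(x+\omega\cdot h)$, which is the summand in the definition of $\|f\|_{U^d}^{2^d}$. I would then apply $\mathrm{QFT}_G^{\otimes(d+1)}$ to all $d+1$ registers and compute the amplitude at the trivial character $\ket{0}^{\otimes(d+1)}$: by the QFT action this amplitude equals
\begin{align}
\mathbb{E}_{x,h_1,\dots,h_d}\prod_{\omega\in\{0,1\}^d} \mathcal{C}^{|\omega|} f(x+\omega\cdot h) = \|f\|_{U^d}^{2^d},
\end{align}
so the probability of observing $\ket{0}^{\otimes(d+1)}$ is the square, $\|f\|_{U^d}^{2^{d+1}}$, as claimed. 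The gate-count bookkeeping then gives $2^d$ oracle calls at cost $C_f$ each, $O(d\,2^d)$ controlled additions (absorbed into the oracle/QFT terms or negligible), and $d+1$ Fourier transforms at cost $O(\log^2|G|)$ each, yielding $\mathcal{O}(2^d C_f + (d+1)\log^2|G|)$.

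The main obstacle is the register management: unlike the $U^2$ case, where a single work register suffices, here I must make sure that after handling each $\omega$ I genuinely restore all shift registers to their original contents before moving to the next pattern, so that the $2^d$ phases multiply coherently on the same basis state rather than entangling with garbage. The cleanest way is to use one dedicated ancilla register initialized to $\ket{0}$: for each $\omega$, add in $x$ and the selected $h_i$'s, apply the (possibly conjugated) oracle to the ancilla, then subtract everything back out so the ancilla returns to $\ket{0}$ and is disentangled; correctness of the uncomputation is immediate since $\mathrm{CADD}$ is its own structured inverse. A secondary point worth stating carefully is that $|f|=1$ is needed for $U_f$ to be unitary and for $\mathcal{C}^{|\omega|}f = \overline{f}$ when $|\omega|$ is odd to be realized exactly by $U_f^\dagger$; this is already assumed in the hypothesis. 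Everything else is a routine repetition of the $U^2$ argument.
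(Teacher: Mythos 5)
Your proposal is correct and follows essentially the same route as the paper: prepare the uniform superposition over $(x,h_1,\dots,h_d)$, accumulate the phase $\prod_{\omega}\mathcal{C}^{|\omega|}f(x+\omega\cdot h)$ via $2^d$ oracle calls interleaved with controlled additions, then apply $d+1$ Fourier transforms and read off the all-zero amplitude $\|f\|_{U^d}^{2^d}$, whose square gives the stated probability. Your explicit treatment of the ancilla uncomputation is a detail the paper's sketch leaves implicit, but it is the same construction.
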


\begin{proof}
The proof follows the same pattern as for the \( U^2 \) case. The Gowers \( U^d \) norm involves a sum over \( 2^d \) terms of the form
\begin{align}
\prod_{\omega \in \{0,1\}^d} f\left(x + \omega \cdot \vec{a} \right)^{(-1)^{|\omega|}}.
\end{align}
Each term requires querying \( f \) or \( \overline{f} \), determined by the parity of \( |\omega| \). After preparing the uniform superposition over \( (x, a_1, \ldots, a_d) \), we compute all \( 2^d \) relevant points and apply the corresponding phase oracles or conjugates. We then apply the quantum Fourier transform over \( G^{\otimes (d+1)} \) and measure. The amplitude at \( \ket{0}^{\otimes (d+1)} \) yields \( \|f\|_{U_d}^{2^{d+1}} \), as required.
\end{proof}


\section{Applications}\label{sec:applications}

For the application, we mainly focus on $\mathbb{F}_p^n$, although our quantum algorithm for estimating Gowers norms (Theorem~\ref{thm:qk}) works for any finite abelian group. The reason for focusing on $\mathbb{F}_p^n$ is that the inverse theorem for the Gowers norm can be stated more cleanly in the finite field case, without involving nilsequences.

We now demonstrate a simple yet important application: distinguishing structured functions from random ones using Gowers norms. The Gowers \( U^{d+1} \) norm is known to be $1$ for phase polynomials of degree at most \( d \), and exponentially small for random functions with high probability. This makes it a natural tool for quantum property testing. 
\subsection{Distinguishing Phase Polynomials from Random Functions}

Let \( G = \mathbb{F}_p^n \), and let \( \omega = e^{2\pi i/p} \). Suppose we are given oracle access to a function \( f : G \to \mathbb{C} \) satisfying \( |f(x)| = 1 \) for all \( x \in G \). We consider the following promise problem:

\begin{enumerate}
    \item[\textbf{(Yes)}] \( f(x) = \omega^{P(x)} \) for some unknown polynomial \( P : \mathbb{F}_p^n \to \mathbb{F}_p \) of degree at most \( d \).
    \item[\textbf{(No)}] \( f \) is a Haar-random function from \( \mathbb{F}_p^n \) to the unit circle in \( \mathbb{C} \), i.e., each \( f(x) \sim \text{Uniform}(S^1) \), independently at random.
\end{enumerate}

It is known that for Haar-random functions, the Gowers \( U^{d+1} \) norm is small with high probability:
\begin{equation}
    \mathbb{E}_f \left[ \|f\|_{U^{d+1}}^{2^{d+1}} \right] \leq \frac{1}{p^n},
\end{equation}
(see, e.g., \cite{tao2012higher, tao2010inverse}) indicating that the all-zero amplitude in our quantum algorithm will be close to zero.
\begin{theorem}[Quantum Detection of Degree-\( d \) Phase Polynomials via Gowers Norm]
\label{thm:quantum-detection}
Fix a constant \( d > 1 \). Let \( f : \mathbb{F}_p^n \to \mathbb{C} \) be a function with \( |f(x)| = 1 \). There exists a quantum algorithm which, using \( O(2^d) \) queries to the phase oracle \( U_f : \ket{x} \mapsto f(x)\ket{x} \), decides with high confidence whether \( f \) is a degree-\( d \) phase polynomial or a completely random function.

Specifically:
\begin{align}
    \text{The algorithm accepts if and only if } \|f\|_{U^{d+1}} = 1.
\end{align}
Moreover, the algorithm estimates \( \|f\|_{U^{d+1}}^2 \) via amplitude estimation:
\begin{align}
    \Pr[\text{output} = \ket{0}^{\otimes (d+1)n}] = \|f\|_{U^{d+1}}^{2^{d+1}}.
\end{align}
\end{theorem}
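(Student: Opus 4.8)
The plan is to assemble this result directly from the machinery already in place: Theorem~\ref{thm:qk} provides the core quantum subroutine, and the two bullet points in the promise problem (the direct theorem for phase polynomials, and the Haar-random tail bound) supply the separation between the \textbf{Yes} and \textbf{No} cases. Concretely, I would first invoke Theorem~\ref{thm:qk} with $G = \mathbb{F}_p^n$ and $d$ replaced by $d+1$: this gives a circuit using $2^{d+1} = O(2^d)$ queries to $U_f$ (and $U_f^\dagger$), together with $d+2$ copies of $\mathrm{QFT}_{\mathbb{F}_p^n}$, whose output state has the property that measuring in the computational basis yields the all-zero string $\ket{0}^{\otimes(d+1)n}$ with probability exactly $\|f\|_{U^{d+1}}^{2^{d+1}}$. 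This establishes the displayed probability identity in the statement verbatim.

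Next I would argue the case analysis. In the \textbf{Yes} case, $f = \omega^{P}$ with $\deg P \le d$, so by the Direct Theorem (Phase Polynomials Maximize Gowers Norm) we have $\|f\|_{U^{d+1}} = 1$, hence the acceptance probability is exactly $1$ and a single run of the circuit accepts with certainty. In the \textbf{No} case, Markov's inequality applied to the stated bound $\mathbb{E}_f[\|f\|_{U^{d+1}}^{2^{d+1}}] \le p^{-n}$ shows that with probability at least $1 - p^{-n/2}$ over the draw of $f$, we have $\|f\|_{U^{d+1}}^{2^{d+1}} \le p^{-n/2}$, so the probability of spuriously measuring the all-zero string is exponentially small; repeating the circuit $O(1)$ times and rejecting unless \emph{every} run returns $\ket{0}^{\otimes(d+1)n}$ drives the error down further. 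This gives the ``accepts if and only if $\|f\|_{U^{d+1}} = 1$'' dichotomy in the high-confidence sense. For the quantitative ``estimates $\|f\|_{U^{d+1}}^2$'' clause I would wrap the circuit of Theorem~\ref{thm:qk} inside standard amplitude estimation, treating the ``good'' subspace as $\mathrm{span}\{\ket{0}^{\otimes(d+1)n}\}$: amplitude estimation returns an estimate of the success amplitude $\|f\|_{U^{d+1}}^{2^d}$ to additive error $\eta$ using $O(1/\eta)$ invocations of the circuit, from which $\|f\|_{U^{d+1}}^2$ (or any fixed power) is recovered by taking the appropriate root, with the error propagated through the (Lipschitz on bounded intervals) root function.

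The main obstacle, and the place where the proof needs the most care, is the gap between the two regimes in the \textbf{No} case versus the clean ``if and only if'' phrasing: the algorithm cannot literally certify $\|f\|_{U^{d+1}} = 1$ from finitely many samples, so the statement must be read as a bounded-error decision procedure, and I would be explicit that what is really shown is that the accept probability is $1$ in the \textbf{Yes} case and $o(1)$ (in fact exponentially small in $n$, with high probability over the instance) in the \textbf{No} case. A secondary point worth stating carefully is that $\|f\|_\infty \le 1$ with $|f| = 1$ is exactly the hypothesis under which both Theorem~\ref{thm:qk} and the Haar-random bound apply, so no extra normalization is needed; and one should note that the query count $O(2^d)$ is a constant since $d$ is fixed, so the only $n$-dependence in the gate complexity comes from the $d+2$ Fourier transforms, each of cost $\mathrm{poly}(n,\log p)$ by the efficient $\mathrm{QFT}_{\mathbb{F}_p^n}$ circuits cited earlier. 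I do not expect any genuinely new estimate to be required; the work is entirely in correctly composing Theorem~\ref{thm:qk}, the direct theorem, the Markov bound, and amplitude estimation.
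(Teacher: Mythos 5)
Your proposal is correct and follows essentially the same route as the paper's proof: invoke Theorem~\ref{thm:qk} for the $U^{d+1}$ norm, use the direct theorem to get acceptance probability $1$ in the \textbf{Yes} case, and use the Haar-random expectation bound to get an exponentially small acceptance probability in the \textbf{No} case, with $O(1)$ repetitions. You are in fact somewhat more careful than the paper — explicitly applying Markov's inequality to convert the expectation bound into a high-probability statement over the draw of $f$, and flagging that the ``if and only if'' must be read as a bounded-error decision; note only that instantiating Theorem~\ref{thm:qk} at $d+1$ actually yields measurement probability $\|f\|_{U^{d+1}}^{2^{d+2}}$ on $d+2$ registers (an off-by-one already present in the paper's own statement), which does not affect the argument.
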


\begin{proof}
This is a direct application of the quantum algorithm described in Theorem~\ref{thm:qk}. If \( f = \omega^{P(x)} \) for some degree-\( d \) polynomial \( P \), then \( \|f\|_{U^{d+1}} = 1 \), so the output probability is \( 1 \). On the other hand, for a Haar-random function \( f \), the expected value \( \mathbb{E}_f[\|f\|_{U^{d+1}}^{2^{d+1}}] \leq p^{-\Omega(n)} \), i.e., exponentially small in \( n \). Therefore, a constant number of repetitions suffices to distinguish the two cases with high probability.
\end{proof}

Classically, the well-known Blum–Luby–Rubinfeld (BLR) test provides a method to check whether a Boolean function \( f : \mathbb{F}_2^n \to \{-1,1\} \) is linear. This test can be interpreted as an estimation of the Gowers \( U^2 \) norm of \( f \). Recent work such as \cite{jothishwaran2020quantum} has extended this idea to the quantum setting, demonstrating how quantum algorithms can be used to estimate the Gowers norm to test linearity over \( \mathbb{F}_2^n \).

In this work, we extend this quantum testing framework in two important directions:
\begin{enumerate}
    \item \textbf{From \( \mathbb{F}_2 \) to general \( \mathbb{F}_p \):} We generalize the quantum linearity testing algorithm from the binary field \( \mathbb{F}_2 \) to any prime field \( \mathbb{F}_p \), allowing for broader applicability in non-Boolean settings.
    \item \textbf{From linear to higher-degree phase polynomials:} We extend the framework from detecting linear functions to detecting degree-\( d \) phase polynomials over \( \mathbb{F}_p^n \) for any \( d \geq 1 \), using the Gowers \( U^{d+1} \) norm as a natural generalization.
\end{enumerate}

We say a function \( f : \mathbb{F}_p^n \to \mathbb{C} \), with \( |f(x)| = 1 \), is \( \varepsilon \)-far from all degree-\( d \) phase polynomials if
\begin{equation}
\forall P \in \mathbb{F}_p[x_1, \dots, x_n], \deg(P) \leq d,\quad \left| \langle f, \omega^{P} \rangle \right| < \varepsilon.
\end{equation}
In this case, the Gowers norm \( \|f\|_{U^{d+1}} \) must be small, and our quantum algorithm acts as a property tester that distinguishes phase structure from random-like behavior.

\subsection{Quantum Approximate Detection of Degree-\( d \) Phase Polynomials}

We now extend our exact detection result to an approximate version, under the promise that the function \( f \) is either a degree-\( d \) phase polynomial or \( \varepsilon \)-far from all such functions. The analysis relies on a well-known inverse theorem for the Gowers norm, summarized in Terence Tao's exposition \cite{tao2012higher}. For completeness, we recall the following:

\begin{theorem}[1\% Inverse Theorem for \( U_{d+1} \) Norm (Theorem 1.5.3 in \cite{tao2012higher})]
Let \( \mathbb{F} \) be a field of characteristic greater than \( d \), and let \( V = \mathbb{F}_p^n \). Suppose \( f : V \to \mathbb{C} \) satisfies \( \|f\|_\infty \leq 1 \) and \( \|f\|_{U_{d+1}} \geq \varepsilon \). Then there exists a degree-\( d \) polynomial \( P : V \to \mathbb{F}_p \) such that
\begin{align}
|\langle f, \omega^{P} \rangle| \geq \delta,
\end{align}
for some constant \( \delta = \delta(p, d, \varepsilon) > 0 \).
\end{theorem}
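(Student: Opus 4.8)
The plan is to prove the theorem by induction on $d$, with the derivative identity
$\|f\|_{U^{d+1}}^{2^{d+1}} = \mathbb{E}_{h \in V}\,\|\Delta_h f\|_{U^{d}}^{2^{d}}$, where $\Delta_h f(x) := f(x)\overline{f(x+h)}$, serving as the engine that reduces a $U^{d+1}$ statement about $f$ to a $U^{d}$ statement about its multiplicative derivatives. The base case $d=1$ is exactly the computation already given in the excerpt: from $\|f\|_{U^2}^4 = \sum_{\gamma} |\widehat f(\gamma)|^4$ and Parseval $\sum_\gamma |\widehat f(\gamma)|^2 \le 1$ one gets $\max_\gamma |\widehat f(\gamma)|^2 \ge \varepsilon^4$, so $f$ correlates at level $\delta = \varepsilon^4$ with the linear polynomial defining the optimal $\gamma$.

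For the inductive step, assume the statement for $d-1$. If $\|f\|_{U^{d+1}} \ge \varepsilon$, the derivative identity combined with a reverse-Markov/pigeonhole step shows that for an $\varepsilon^{O(1)}$-fraction of shifts $h$ one has $\|\Delta_h f\|_{U^{d}} \ge \varepsilon^{O(1)}$; call these $h$ \emph{good}. The induction hypothesis then supplies, for each good $h$, a degree-$(d-1)$ polynomial $P_h : V \to \mathbb{F}_p$ with $|\langle \Delta_h f, \omega^{P_h}\rangle| \ge \delta'(p,d,\varepsilon) > 0$. The remaining --- and genuinely hard --- task is to \emph{integrate} the family $\{P_h\}$ into a single degree-$d$ polynomial $P$ with $\partial_h P \approx P_h$, after which $|\langle f, \omega^{P}\rangle| \ge \delta$ follows by a short Cauchy--Schwarz/Gowers-inverse argument exploiting that $\Delta_h f$ correlates with $\omega^{\partial_h P}$ on a dense set of $h$.

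The integration splits into two sub-steps. First, a \emph{consistency (cocycle) argument}: by pairing distinct shifts through the Gowers--Cauchy--Schwarz inequality and invoking polynomial regularity (decomposing each $P_h$ into a bounded-rank part plus a pseudorandom remainder), one arranges that the leading degree-$(d-1)$ part of $P_h$ depends additively on $h$ up to low-rank errors, i.e. $P_{h+h'} - P_h - P_{h'}$ has small rank for generic $h,h'$. Second, a \emph{symmetry argument}: additivity in $h$ of a degree-$(d-1)$ object, together with the symmetry of mixed derivatives, lets one solve the resulting cohomological equation and recover a degree-$d$ polynomial $P$ with $\partial_h P = P_h$ up to negligible corrections. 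This last reconstruction is a discrete Taylor expansion that divides by factorials up to $d!$, which is precisely where the hypothesis $\mathrm{char}(\mathbb{F}) > d$ enters and why classical polynomials suffice; in low characteristic one would be forced into the non-classical polynomials of Tao--Ziegler, as noted in the excerpt.

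The main obstacle is this integration step: the cocycle/regularity and symmetry arguments carry essentially all the content of the theorem, and their quantitative bookkeeping is what produces $\delta(p,d,\varepsilon)$ --- and, in the refinements of Mili\'cevi\'c \cite{milicevic2022quantitative, milicevic2024quasipolynomial} used elsewhere in this paper, determines whether $\delta$ can be taken to decay only quasi-polynomially in $\varepsilon$. By contrast, the derivative identity, the $U^2$ base case, the pigeonholing of good $h$, and the final correlation estimate are all routine. A fully self-contained treatment would follow Tao's exposition \cite{tao2012higher}; here we use the statement as a black box.
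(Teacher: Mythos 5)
The paper does not prove this statement at all: it is imported verbatim as Theorem 1.5.3 of Tao's book \cite{tao2012higher} and used as a black box, so there is no ``paper proof'' to match your attempt against. Your proposal is an honest roadmap of the standard finitary strategy --- the recursive identity \( \|f\|_{U^{d+1}}^{2^{d+1}} = \mathbb{E}_h \|\Delta_h f\|_{U^d}^{2^d} \), pigeonholing a dense set of good shifts, applying the inductive hypothesis to each \( \Delta_h f \), and then integrating the family \( \{P_h\} \) via a cocycle/regularity argument and a symmetry argument, with the division by \( d! \) explaining the hypothesis \( \mathrm{char}(\mathbb{F}) > d \). That is a correct description of where the difficulty lives, and you are explicit that you do not carry out the integration step, which is consistent with the paper's own treatment of deferring entirely to the reference. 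Two small remarks. First, in the base case you conclude correlation at level \( \delta = \varepsilon^4 \); from \( \max_\gamma |\widehat f(\gamma)|^2 \geq \varepsilon^4 \) you actually get \( |\widehat f(\gamma)| \geq \varepsilon^2 \), which is the bound the paper quotes for \( d=1 \) (your weaker constant is still valid for the qualitative statement, but worth fixing). Second, the proof that Theorem 1.5.3 in \cite{tao2012higher} actually points to is the ergodic-theoretic argument of Bergelson--Tao--Ziegler via the Furstenberg correspondence principle, which yields no effective \( \delta \); the finitary induction-plus-integration scheme you sketch is closer to the Green--Tao \( U^3 \) argument and to the quantitative refinements of Gowers--Mili\'cevi\'c used elsewhere in this paper. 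Either route establishes the statement, so this is a difference of provenance rather than a gap, but if you intend your sketch to be expanded into a full proof you should cite the quantitative line of work rather than Theorem 1.5.3 itself.
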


By the contrapositive of this theorem, if \( f \) is \( \varepsilon \)-far from all degree-\( d \) phase polynomials—that is, if for all such \( P \), we have \( |\langle f, \omega^P \rangle| < \varepsilon \)—then the Gowers norm must be small:
\begin{align}
\|f\|_{U_{d+1}} \leq \delta = \delta(\varepsilon, d, p).
\end{align}
Note that the function \( \delta(\varepsilon, d, p) \) depends intricately on higher-order Fourier analysis and is generally not known in closed form.

We now present our main result, which generalizes the quantum linearity test of \cite{jothishwaran2020quantum} (valid for \( p=2 \)) to arbitrary prime fields and higher-degree polynomials.

\begin{theorem}\label{thm:qn}[Quantum Approximate Detection of Degree-\( d \) Phase Polynomials]
Let \( f : \mathbb{F}_p^n \to \mathbb{C} \) be a function with \( |f(x)| = 1 \) for all \( x \), given as a phase oracle. Assume \( p > d \). Then there exists a quantum algorithm that distinguishes:
\begin{enumerate}
    \item[(Yes)] \( f(x) = \omega^{P(x)} \) for some degree-\( d \) polynomial \( P \), versus
    \item[(No)] \( f \) is \( \epsilon \)-far from all degree-\( d \) phase polynomials,
\end{enumerate}
with error at most \( \eta \), using
\begin{align}
O\left(\frac{1}{\Delta^2} \log \frac{1}{\eta} \right), \quad \text{where } \Delta := 1 - \delta^{2^{d+1}}
\end{align}
measurements on a state constructed using \( 2^d \) queries to the oracle \( U_f \) and \( d+1 \) QFT gates over \( \mathbb{F}_p^n \).
\end{theorem}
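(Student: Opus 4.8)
The plan is to reduce the decision to estimating a single scalar --- the probability that the Gowers-norm circuit of Theorem~\ref{thm:qk}, run at level $d+1$, returns the all-zero outcome --- and then to separate the two cases using the Direct Theorem on the (Yes) side and the contrapositive of the $1\%$ Inverse Theorem (Theorem~\ref{thm:inverse}) on the (No) side. Concretely, I would invoke Theorem~\ref{thm:qk} with $d$ replaced by $d+1$: using $O(2^d)$ queries to $U_f$ (and $U_f^{\dagger}$) and $d+1$ Fourier transforms $\mathrm{QFT}_{\mathbb{F}_p^n}$, one prepares a state whose all-zero measurement probability equals $p_{\mathrm{acc}} := \|f\|_{U^{d+1}}^{2^{d+1}}$, exactly as in Theorem~\ref{thm:quantum-detection}. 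This is the state named in the statement; every repetition of the test re-prepares it from scratch.

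Next I would bound $p_{\mathrm{acc}}$ in the two cases. In case (Yes), $f = \omega^{P}$ with $\deg P \le d$, so the Direct Theorem gives $\|f\|_{U^{d+1}} = 1$ and hence $p_{\mathrm{acc}} = 1$. In case (No), being $\epsilon$-far from every degree-$d$ phase polynomial means $|\langle f, \omega^{P} \rangle| < \epsilon$ for all such $P$; since $p > d$ the underlying field has characteristic greater than $d$, so Theorem~\ref{thm:inverse} applies, and its contrapositive (inverting the quantitative bound) yields a threshold $\delta = \delta(\epsilon, d, p) < 1$ with $\|f\|_{U^{d+1}} \le \delta$, hence $p_{\mathrm{acc}} \le \delta^{2^{d+1}} = 1 - \Delta$. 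Thus the accepting probabilities of the two cases are separated by the gap $\Delta > 0$.

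For amplification I would prepare the above state, measure in the computational basis, and repeat $N$ times; letting $\widehat p$ denote the empirical frequency of the all-zero outcome, accept iff $\widehat p > 1 - \Delta/2$. Hoeffding's inequality gives $\Pr[\,|\widehat p - p_{\mathrm{acc}}| \ge \Delta/2\,] \le 2e^{-N\Delta^2/2}$, so $N = O(\Delta^{-2}\log(1/\eta))$ makes the error at most $\eta$ in both cases, and each run costs $2^d$ oracle queries and $d+1$ QFTs, which is the claimed complexity. (Feeding $p_{\mathrm{acc}}$ into amplitude estimation would instead give the quadratically better $O(\Delta^{-1}\log(1/\eta))$ at the price of coherent oracle access; I would keep the incoherent version to match the statement.)

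The genuinely delicate point is not the quantum-algorithmic part but the size of $\Delta = 1 - \delta^{2^{d+1}}$: the function $\delta(\epsilon, d, p)$ furnished by the inverse theorem is not explicit, and for general $d$ it may be extremely small (of tower type in the earliest proofs), which would render $1/\Delta$ --- and hence the algorithm --- impractical. The resolution, as flagged in the introduction, is to import the quantitative inverse theorems of~\cite{milicevic2022quantitative, milicevic2024quasipolynomial}, valid for the small degrees $d = 4,5,6$ under the characteristic hypothesis $p > d$, which make $\delta$, and therefore $1/\Delta$, only quasipolynomially small; carefully propagating that dependence through those theorems is the real work, the rest being the routine Hoeffding estimate above.
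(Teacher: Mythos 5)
Your proposal follows essentially the same route as the paper: prepare the Gowers-norm state of Theorem~\ref{thm:qk}, read off the all-zero outcome probability as a power of \( \|f\|_{U^{d+1}} \), use the direct theorem for the (Yes) case and the contrapositive of Theorem~\ref{thm:inverse} (with \( p>d \)) for the (No) case, and finish with a Chernoff/Hoeffding argument giving \( O(\Delta^{-2}\log(1/\eta)) \) repetitions. The only discrepancies (which exponent of \( \delta \) appears in \( \Delta \), and the exact query/QFT count for the \( U^{d+1} \) versus \( U^{d} \) circuit) mirror imprecisions already present in the paper's own statement and proof, so the argument is correct in the same sense the paper's is.
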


\begin{proof}
Following the construction from the exact case, we prepare the quantum state
\begin{align}
\frac{1}{N^{(d+1)/2}} \sum_{x, h_1, \ldots, h_d \in \mathbb{F}_p^n} \Delta_{h_1, \ldots, h_d} f(x) \ket{x}\ket{h_1}\ldots\ket{h_d},
\end{align}
and apply \( d+1 \) parallel QFTs. The amplitude at the all-zero state \( \ket{0}^{\otimes (d+1)n} \) equals
\begin{align}
\mathbb{E}_{x, h_1, \ldots, h_d} \Delta_{h_1, \ldots, h_d} f(x) = \|f\|_{U_{d+1}}^{2^{d+1}}.
\end{align}
In the YES case, this value is exactly 1. In the NO case, by the inverse theorem, it is at most \( \delta^{2^{d+1}} \).

To distinguish between these two scenarios, we measure the quantum state multiple times and estimate the empirical frequency \( \hat{p} \) of observing the zero string. Let \( p \) be the true probability of measuring \( \ket{0}^{\otimes (d+1)n} \). We aim to distinguish \( p = 1 \) versus \( p \leq \delta^{2^{d+2}} \). By a standard Chernoff bound:
\begin{align}
\Pr\left[ |\hat{p} - p| \geq \frac{\Delta}{2} \right] \leq 2 \exp\left( - \frac{m \Delta^2}{2} \right),
\end{align}
where \( \Delta := 1 - \delta^{2^{d+2}} \) and \( m \) is the number of independent measurements. To ensure total error probability at most \( \eta \), it suffices to choose
\begin{align}
m = O\left( \frac{1}{\Delta^2} \log \frac{1}{\eta} \right).
\end{align}
\end{proof}

\begin{corollary}[Total Resources]
Let \( C_f \) be the cost of a single query to the oracle \( U_f \), and assume the cost of a QFT over \( \mathbb{F}_p^n \) is \( O(\log^2 |G|) \). Then the overall complexity of the algorithm is:
\begin{align}
\text{Oracle complexity: } O\left(2^d \cdot \frac{1}{\Delta^2} \log \frac{1}{\eta} \right),
\end{align}
\begin{align}
\text{Gate complexity: } O\left( (2^d C_f + d \log^2 |G|) \cdot \frac{1}{\Delta^2} \log \frac{1}{\eta} \right).
\end{align}
\end{corollary}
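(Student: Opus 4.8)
The plan is to simply tally the resources consumed by the algorithm of Theorem~\ref{thm:qn}, multiplying the per-round cost by the number of measurement rounds. First I would recall that each round of the algorithm consists of (i) preparing the uniform superposition over $(x,h_1,\dots,h_d)\in(\mathbb{F}_p^n)^{d+1}$, which costs $O(d\log|G|)$ elementary single-qudit gates; (ii) computing the $2^d$ shifted evaluation points $x+\omega\cdot h$ (for $\omega\in\{0,1\}^d$) via controlled group additions $\mathrm{CADD}$ and applying the phase oracle $U_f$ or its adjoint at each, which costs $2^d$ oracle queries plus $O(2^d\,\mathrm{poly}\log|G|)$ gates for the additions and their uncomputation; (iii) applying $d+1$ copies of $\mathrm{QFT}_{\mathbb{F}_p^n}$, each at cost $O(\log^2|G|)$ by the assumed bound; and (iv) a computational-basis measurement. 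Summing these, one round uses exactly $2^d$ oracle queries and $O(2^d C_f + d\log^2|G|)$ elementary gates, where the $d\log^2|G|$ term absorbs the $d+1$ QFTs and the group-addition overhead is of lower order.

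Next I would invoke the sample-complexity bound established inside the proof of Theorem~\ref{thm:qn}: to distinguish the \emph{Yes} case (acceptance probability $1$) from the \emph{No} case (acceptance probability at most $\delta^{2^{d+1}}$) with total error at most $\eta$, it suffices to take $m = O\!\left(\Delta^{-2}\log(1/\eta)\right)$ independent rounds, where $\Delta = 1-\delta^{2^{d+1}}$ is the gap between the two acceptance probabilities and the bound follows from the Chernoff estimate already written down. Multiplying the per-round counts by $m$ then yields oracle complexity $O\!\left(2^d\,\Delta^{-2}\log(1/\eta)\right)$ and gate complexity $O\!\left((2^d C_f + d\log^2|G|)\,\Delta^{-2}\log(1/\eta)\right)$, which are precisely the claimed bounds.

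There is no genuine mathematical obstacle here: the statement is a bookkeeping corollary of Theorem~\ref{thm:qn} together with Theorem~\ref{thm:qk}. The only points requiring a little care are confirming that the $\mathrm{CADD}$ gates and the uncomputation steps restoring the ancilla registers contribute only $O(2^d\,\mathrm{poly}\log|G|)$ gates, so that they are dominated by the $2^d C_f$ oracle term and do not enlarge the stated complexity, and being consistent about which gap $\Delta$ enters the Chernoff estimate so that the $\Delta$ appearing in the corollary is the same quantity as in Theorem~\ref{thm:qn}. Both are immediate once the circuit underlying Theorems~\ref{thm:qk} and~\ref{thm:qn} is written out explicitly.
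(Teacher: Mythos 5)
Your proposal is correct and matches the paper's (implicit) argument exactly: the corollary is pure bookkeeping, multiplying the per-round resource counts from the state-preparation circuit of Theorem~\ref{thm:qk} ($2^d$ oracle calls, $d+1$ QFTs at $O(\log^2|G|)$ each, with the $\mathrm{CADD}$ overhead absorbed) by the $m = O(\Delta^{-2}\log(1/\eta))$ repetitions from the Chernoff bound in Theorem~\ref{thm:qn}. Your side remark about pinning down which exponent of $\delta$ defines $\Delta$ is well taken, since the paper's theorem statement and its proof use $2^{d+1}$ and $2^{d+2}$ inconsistently, but this does not affect the asymptotic claim.
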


\paragraph{Remarks.} Due to the existence of nonclassical polynomials for fields with small characteristic, the inverse theorem does not generally hold for \( p \leq d \). Therefore, our algorithm assumes \( p > d \). However, for \( d = 1 \) and \( d = 2 \) (i.e., testing linear and quadratic phase polynomials), the inverse theorem is known to hold for all prime fields \( \mathbb{F}_p \), and our algorithm applies without restriction.

The effectiveness of our quantum test hinges on the availability of explicit bounds on the Gowers norm in the absence of low-degree structure. In the next section, we begin with the linear case \( d=1 \), where such bounds can be obtained via standard Fourier analysis.
\subsection{Application to Linearity Testing over \( \mathbb{F}_p^n \)}

We now demonstrate a concrete application of our framework to the case \( d = 1 \), i.e., testing whether a function is linear over \( \mathbb{F}_p^n \). This setting corresponds to detecting linear phase polynomials, and serves as a natural generalization of the quantum linearity test over \( \mathbb{F}_2^n \) proposed \cite{jothishwaran2020quantum}. Here, we extend their result to arbitrary prime fields \( \mathbb{F}_p \), for all \( p \geq 2 \), showing that efficient quantum detection is still possible using the \( U^2 \) norm.

In the linear case, an explicit bound on the Gowers norm can be derived using standard Fourier analysis. We begin with the following lemma:

\begin{lemma}\label{lem:lf}
Let \( f : \mathbb{F}_p^n \to \mathbb{C} \) be a function with \( |f(x)| \leq 1 \) for all \( x \), and suppose that
\begin{align}
\left| \left\langle f, \chi \right\rangle \right| = \left| \frac{1}{|G|} \sum_{g \in G} f(g)\chi(g) \right| \leq \varepsilon
\quad \text{for all additive characters } \chi : \mathbb{F}_p^n \to \mathbb{C}.
\end{align}
Then the Gowers \( U^2 \) norm of \( f \) satisfies
\begin{align}
\|f\|_{U^2} \leq \varepsilon^{1/2}.
\end{align}
\end{lemma}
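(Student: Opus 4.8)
The plan is to use the Fourier-analytic identity for the $U^2$ norm established earlier in the excerpt, namely
\[
\|f\|_{U^2}^4 = \sum_{\gamma \in \widehat{G}} |\widehat{f}(\gamma)|^4,
\]
which was proved by expanding each copy of $f$ into characters and applying orthogonality. Once this identity is in hand, the lemma reduces to a simple bounding argument on the right-hand side. First I would record the hypothesis in Fourier language: the assumption $|\langle f, \chi\rangle| \le \varepsilon$ for every additive character $\chi$ is exactly the statement that $|\widehat{f}(\gamma)| \le \varepsilon$ for all $\gamma \in \widehat{G}$, since the characters of $\mathbb{F}_p^n$ are precisely the additive characters and $\widehat{f}(\gamma) = \langle f, \gamma\rangle$.

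Next I would split a fourth power as $|\widehat{f}(\gamma)|^4 = |\widehat{f}(\gamma)|^2 \cdot |\widehat{f}(\gamma)|^2$, bound one factor by $\varepsilon^2$ using the hypothesis, and sum the remaining factor:
\[
\sum_{\gamma \in \widehat{G}} |\widehat{f}(\gamma)|^4 \le \varepsilon^2 \sum_{\gamma \in \widehat{G}} |\widehat{f}(\gamma)|^2.
\]
Then I would invoke Parseval's identity, $\sum_{\gamma} |\widehat{f}(\gamma)|^2 = \mathbb{E}_{x}|f(x)|^2 \le 1$, where the final inequality uses the pointwise bound $|f(x)| \le 1$. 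Combining these gives $\|f\|_{U^2}^4 \le \varepsilon^2$, and taking fourth roots yields $\|f\|_{U^2} \le \varepsilon^{1/2}$, as claimed.

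There is no real obstacle here; the argument is three lines once the two standard facts (the Fourier formula for $U^2$ and Parseval) are cited. The only point requiring a little care is the normalization convention: one must check that the inner product $\langle f, \chi\rangle$ in the hypothesis matches the Fourier coefficient $\widehat{f}(\gamma)$ as defined in the preliminaries (both use the normalized average $\mathbb{E}_{x\in G}$), and that Parseval is being applied with the same normalization — which it is, since the excerpt states Parseval as $\sum_\gamma |\widehat f(\gamma)|^2 = \mathbb{E}_x |f(x)|^2$. I would also note in passing that the hypothesis $|f| \le 1$ (rather than $|f| = 1$) is harmless: Parseval still gives $\mathbb{E}_x|f(x)|^2 \le 1$, so the bound goes through unchanged, and this is the only place the magnitude assumption enters.
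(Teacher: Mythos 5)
Your proposal is correct and follows essentially the same route as the paper's proof: both translate the hypothesis into the uniform bound $|\widehat f(\gamma)|\le\varepsilon$, apply the identity $\|f\|_{U^2}^4=\sum_\gamma|\widehat f(\gamma)|^4\le\varepsilon^2\sum_\gamma|\widehat f(\gamma)|^2$, and finish with Parseval and the bound $\|f\|_2^2\le 1$. Your added remarks on normalization conventions and on $|f|\le 1$ versus $|f|=1$ are accurate but not needed beyond what the paper already does.
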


\begin{proof}
The Gowers \( U^2 \) norm is related to Fourier coefficients via:
\begin{align}
\|f\|_{U^2}^4 = \sum_{\xi \in \mathbb{F}_p^n} |\widehat{f}(\xi)|^4,
\end{align}
where \( \widehat{f}(\xi) = \mathbb{E}_{x \in \mathbb{F}_p^n} f(x) \omega^{-\langle \xi, x \rangle} \) is the Fourier coefficient at frequency \( \xi \). By assumption, \( |\widehat{f}(\xi)| \leq \varepsilon \) for all \( \xi \). Hence,
\begin{align}
\|f\|_{U^2}^4 = \sum_{\xi} |\widehat{f}(\xi)|^4 
\leq \varepsilon^2 \sum_{\xi} |\widehat{f}(\xi)|^2 
= \varepsilon^2 \cdot \|f\|_2^2 \leq \varepsilon^2,
\end{align}
since \( |f(x)| \leq 1 \) implies \( \|f\|_2^2 \leq 1 \). Taking the fourth root completes the proof:
\begin{align}
\|f\|_{U^2} \leq \varepsilon^{1/2}.
\end{align}
\end{proof}

By combining Lemma~\ref{lem:lf} with our general detection theorem (Theorem~\ref{thm:qn}), we obtain an efficient quantum algorithm for testing linearity over \( \mathbb{F}_p^n \):

\begin{theorem}\label{thm:linear}
[Quantum Approximate Detection of Linear Phase Polynomials]  
Let \( f : \mathbb{F}_p^n \to \mathbb{C} \) be a function with \( |f(x)| = 1 \) for all \( x \), given as a phase oracle. Then there exists a quantum algorithm that distinguishes:
\begin{enumerate}
    \item[(Yes)] \( f(x) = \omega^{P(x)} \) for some linear polynomial \( P \), versus
    \item[(No)] \( f \) is \( \varepsilon \)-far from all linear phase polynomials,
\end{enumerate}
with error at most \( \eta \), using
\begin{align}
O\left(\frac{1}{\Delta^2} \log \frac{1}{\eta} \right), \quad \text{where } \Delta := 1 - \varepsilon^4,
\end{align}
measurements on a quantum state constructed with 2 queries to the oracle \( U_f \) and 3 QFTs over \( \mathbb{F}_p^n \).
\end{theorem}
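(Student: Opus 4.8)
The plan is to obtain Theorem~\ref{thm:linear} as the $d=1$ specialization of Theorem~\ref{thm:qn}, with the abstract inverse-theorem constant $\delta$ replaced by the explicit Fourier estimate of Lemma~\ref{lem:lf}. First I would run the $U^2$-norm circuit underlying Theorem~\ref{thm:qn} with $d=1$: after preparing the uniform superposition and applying the oracle/addition schedule of the $U^2$ construction (a constant number of phase-oracle calls, together with three QFTs over $\mathbb{F}_p^n$), one obtains
\[
\ket{\psi} = \frac{1}{N^{3/2}} \sum_{x,h_1,h_2 \in \mathbb{F}_p^n} \Delta_{h_1,h_2} f(x)\,\ket{x}\ket{h_1}\ket{h_2},
\]
and after the three parallel QFTs the amplitude on $\ket{0}^{\otimes 3n}$ is $\mathbb{E}_{x,h_1,h_2}\,\Delta_{h_1,h_2} f(x) = \|f\|_{U^2}^{4}$, so the all-zero outcome is observed with probability $p = \|f\|_{U^2}^{8}$.

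Next I would analyze the two promise cases. In the (Yes) case $f = \omega^{P}$ with $\deg P \le 1$, so by the Direct Theorem $\|f\|_{U^2} = 1$ and $p = 1$. For the (No) case I would first record the elementary observation that a linear phase polynomial $\omega^{P}$ with $P(x) = \langle a,x\rangle + c$ equals $\omega^{c}\chi_a(x)$ for the additive character $\chi_a$; since $\omega^{c}$ is a global constant it does not affect $|\langle f, \cdot\rangle|$, so $|\langle f, \omega^{P}\rangle| = |\langle f, \chi_a\rangle| = |\widehat{f}(a)|$. Hence the hypothesis "$f$ is $\varepsilon$-far from all linear phase polynomials" is exactly the hypothesis $|\widehat{f}(\xi)| < \varepsilon$ for all $\xi \in \mathbb{F}_p^n$ of Lemma~\ref{lem:lf}, which yields $\|f\|_{U^2} \le \varepsilon^{1/2}$, and therefore $p = \|f\|_{U^2}^{8} \le \varepsilon^{4}$ in the (No) case.

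Finally I would carry out the same statistical-distinguishing step as in the proof of Theorem~\ref{thm:qn}: we must decide between $p = 1$ and $p \le \varepsilon^{4}$, a gap of size $\Delta := 1 - \varepsilon^{4}$. Running the circuit $m$ independent times, forming the empirical all-zero frequency $\hat p$, and accepting iff $\hat p \ge 1 - \Delta/2$, a Chernoff bound gives error at most $\eta$ once $m = O\!\big(\Delta^{-2}\log(1/\eta)\big)$; the per-run oracle and gate counts are those of the $d=1$ circuit of Theorem~\ref{thm:qn}.

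I expect no serious obstacle: this is a corollary rather than a new argument. The points that need care are (i) tracking exponents — checking that the $U^2$ bound $\varepsilon^{1/2}$ from Lemma~\ref{lem:lf} propagates through the squared amplitude to the measurement probability as $\varepsilon^{4}$, which is what makes the advertised gap $\Delta = 1 - \varepsilon^{4}$ correct; and (ii) confirming that the characteristic restriction $p > d$ appearing in Theorem~\ref{thm:qn} is vacuous here, since Lemma~\ref{lem:lf} is a direct Parseval estimate valid over every $\mathbb{F}_p^n$, so the $d=1$ inverse bound needs no hypothesis on $p$ — which is precisely why Theorem~\ref{thm:linear} holds for all $p \ge 2$. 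It is worth emphasizing in the write-up that, unlike the general Theorem~\ref{thm:qn}, here $\Delta$ is completely explicit, so the sample complexity $O\!\big((1-\varepsilon^{4})^{-2}\log(1/\eta)\big)$ is an explicit function of $\varepsilon$ and $\eta$.
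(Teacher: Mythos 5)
Your proposal is correct and follows essentially the same route as the paper, which obtains Theorem~\ref{thm:linear} by specializing Theorem~\ref{thm:qn} to $d=1$ and substituting the explicit bound $\delta=\varepsilon^{1/2}$ from Lemma~\ref{lem:lf}; your exponent bookkeeping ($\|f\|_{U^2}\le\varepsilon^{1/2}$ giving measurement probability at most $\varepsilon^4$, hence $\Delta=1-\varepsilon^4$) and the observation that no characteristic restriction is needed both match the paper's remark. Your explicit identification of linear phase polynomials with additive characters up to a global phase $\omega^{c}$ is a small but worthwhile clarification that the paper leaves implicit.
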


We close this section by mention a enhancement of Theorem \ref{thm:qn}, as we know that Gower's $U^2$ norm as two way, if $f$ correlates with linear character then Gower's norm is large and vice versa. If correlated is larger that $\epsilon$, then $\|f\|_{U^2}\geq \epsilon$ by definition. So the following is true:
\begin{corollary}
Suppose \( f : \mathbb{F}_p^n \to \mathbb{C} \) is a function with \( |f(x)| = 1 \) for all \( x \), given as a phase oracle. Then there exists a quantum algorithm that distinguishes: (if $\epsilon_1>\epsilon_2^{1/2}$ )
\begin{enumerate}
    \item[(Yes)] There exists a additive character $\chi$ such that $|\braket{f,\chi}|>\epsilon_1$
    \item[(No)] \( f \) is \( \epsilon_2 \)-far from all additive characters $\chi$
\end{enumerate}
with error at most \( \eta \), using
\begin{align}
O\left(\frac{1}{\Delta^2} \log \frac{1}{\eta} \right), \quad \text{where } \Delta := \epsilon_1^8 - \epsilon_2^{4}
\end{align}
measurements on a state constructed with \( 2 \) queries to the oracle \( U_f \) and \( 3 \) QFT gates over \( \mathbb{F}_p^n \).
\end{corollary}
\begin{proof}
Its very simple to show that, in the yes case, the probability to measure $0$ state is at least $\epsilon_1^8,$ in the no case, that the  the probability to measure $0$ state is at most $\epsilon_2^{4}.$ The rest is just the Chernoff bounds.
\end{proof}

\noindent
\textit{Remark.} The proof follows by setting \( d = 1 \) and \( \delta = \varepsilon^{1/2} \), and noting that this inverse bound holds for all \( p \), including \( p = 2 \). Thus, this result generalizes the quantum linearity test of \cite{jothishwaran2020quantum} from the Boolean domain \( \mathbb{F}_2^n \) to arbitrary 
prime fields \( \mathbb{F}_p^n \).

We close this section by mention a enhancement of Theorem \ref{thm:qn}, as we know that Gower's $U^2$ norm as two way, if $f$ correlates with linear character then Gower's norm is large and vice versa. If correlated is larger that $\epsilon$, then $\|f\|_{U^2}\geq \epsilon$ by definition. So the following is true:
\begin{corollary}
Suppose \( f : \mathbb{F}_p^n \to \mathbb{C} \) is a function with \( |f(x)| = 1 \) for all \( x \), given as a phase oracle. Then there exists a quantum algorithm that distinguishes: (if $\epsilon_1>\epsilon_2^{1/2}$ )
\begin{enumerate}
    \item[(Yes)] There exists a additive character $\chi$ such that $|\braket{f,\chi}|>\epsilon_1$
    \item[(No)] \( f \) is \( \epsilon_2 \)-far from all additive characters $\chi$
\end{enumerate}
with error at most \( \eta \), using
\begin{align}
O\left(\frac{1}{\Delta^2} \log \frac{1}{\eta} \right), \quad \text{where } \Delta := \epsilon_1^8 - \epsilon_2^{4}
\end{align}
measurements on a state constructed with \( 2 \) queries to the oracle \( U_f \) and \( 3 \) QFT gates over \( \mathbb{F}_p^n \).
\end{corollary}
\begin{proof}
Its very simple to show that, in the yes case, the probability to measure $0$ state is at least $\epsilon_1^8,$ in the no case, that the  the probability to measure $0$ state is at most $\epsilon_2^{4}.$ The rest is just the Chernoff bounds.
\end{proof}

After geneerlaize the linear testing, next section, we describe some results by apply Theorem \ref{thm:qn} to the higher order cases.
\subsection{Higher-Degree Testing over \( \mathbb{F}_p^n \)}

Having established the case \( d = 1 \), where we generalized the quantum linearity test from the Boolean domain \( \mathbb{F}_2^n \) to arbitrary prime fields \( \mathbb{F}_p^n \), we now turn to the more challenging task of testing for higher-degree structure. Specifically, we apply the general detection framework of Theorem~\ref{thm:qn} to the case where \( f \) is promised to either be a degree-\( d \) phase polynomial or \( \varepsilon \)-far from all such polynomials.

While the \( d=1 \) case admits an explicit and quantitative inverse theorem leading to efficient quantum algorithms, the situation becomes significantly more subtle when \( d \geq 2 \). In such cases, the effectiveness of quantum detection depends on the existence of inverse theorems \ref{thm:inverse} for higher-order Gowers norms. These theorems guarantee that if the Gowers norm \( \|f\|_{U^{d+1}} \) is large, then \( f \) must correlate with some degree-\( d \) phase polynomial. However, the quantitative dependence of this correlation on the norm—namely, the function \( \delta(\varepsilon) \) is known only in special cases and is typically non-constructive or highly inefficient.

In what follows, we summarize known inverse theorems and their quantitative forms (where available), and use them to instantiate our general quantum detection framework for several important cases, including degree-2 (quadratic) and degree-3 (cubic) testing. This highlights the flexibility of our approach: while classical algorithms often rely on explicit structural recovery, our quantum algorithm operates purely via norm amplification and detection, enabling structure testing even in regimes where constructive inverse theorems are unavailable.

These limitations in current inverse theorems highlight the need to focus on special cases where partial or quantitative results are known. In the next subsection, we survey the state of the art for such quantitative inverse bounds—particularly for \( U^3 \), \( U^4 \), and higher norms—over finite fields. These results directly inform the achievable guarantees of our quantum detection framework in the corresponding regimes.

\subsubsection*{Partial results of Quantitative inverse theorem}
For the \( U^2 \) norm and all primes \( p \), we have a linear test which gives a precise and quantitative statement: if a function is uncorrelated with all additive characters, then its \( U^2 \) norm is small. 
For the higher norm, such quantiative bounds are hard to find and has many rich structures. For $p>k,$ which is called high charactersic case, we don't need to worry about the non classical polynomial, otherwse if $p \leq k$, we have the low-characteristic case. Bergelson, Tao, Ziegler prove the inverse theorem for High-characteristic case \cite{bergelson2010inverse}, and also for low characteristic case \cite{green2012inverse}. Those bounds are not effective and even Tower type. For quantative version. In high charactersic case 2017, Gower, Milićević shows the $U^4$ norm \cite{gowers2017quantitative} but double exponential bounds and further remove one extra exponential, A recent breakthrough by Milićević \cite{milicevic2024quasipolynomial} provides a quasipolynomial inverse theorem over \( \mathbb{F}_p^n \), which establishes the existence of a significant correlation with a cubic phase polynomial whenever the \( U^4 \) norm is large:


\begin{theorem}[Inverse theorem for \( U^4 \) norm over \( \mathbb{F}_p^n \), {\cite[Theorem 2]{milicevic2024quasipolynomial}}]
Let \( f : \mathbb{F}_p^n \to \mathbb{C} \) with \( |f(x)| \leq 1 \), and suppose that
\begin{align}
\|f\|_{U^4} \geq c.
\end{align}
Then there exists a (possibly non-classical) cubic polynomial \( q : \mathbb{F}_p^n \to \mathbb{T} \) such that
\begin{align}
\left| \mathbb{E}_{x \in \mathbb{F}_p^n} f(x) e(q(x)) \right| \geq \exp\left(-\log^{O(1)}(1/c)\right).
\end{align}
\end{theorem}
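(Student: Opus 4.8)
Since this statement is quoted verbatim from Milićević's paper, I only outline the strategy behind it. The plan is to run the classical ``degree lowering plus symmetrization'' scheme for Gowers inverse theorems, but to track quantitative losses carefully so that they remain quasipolynomial in \(1/c\) at every stage. First I would pass from the \(U^4\) norm to the \(U^3\) norm of derivatives: writing \(\Delta_h f(x) := f(x)\overline{f(x+h)}\), one has the identity
\begin{align}
\|f\|_{U^4}^{16} = \mathbb{E}_{h \in \mathbb{F}_p^n} \|\Delta_h f\|_{U^3}^{8} \geq c^{16},
\end{align}
so an averaging argument shows that for at least a \(\tfrac12 c^{16}\)-fraction of shifts \(h\) we have \(\|\Delta_h f\|_{U^3} \geq c'\) with \(c' = (\tfrac12 c^{16})^{1/8} \gtrsim c^2\).

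Next, for each such \emph{good} \(h\) I would invoke the quantitative \(U^3\) inverse theorem over high-characteristic \(\mathbb{F}_p^n\) in its efficient (quasipolynomial-loss) form; since \(p > 3\) no non-classical polynomials appear, so this produces a genuine quadratic polynomial \(q_h : \mathbb{F}_p^n \to \mathbb{F}_p\) with \(\bigl|\mathbb{E}_x\, \Delta_h f(x)\, \omega^{-q_h(x)}\bigr| \geq \exp(-\log^{O(1)}(1/c))\). Writing the purely quadratic part of \(q_h\) as a symmetric bilinear form \(\beta_h\) on \(\mathbb{F}_p^n\), the bulk of the proof is to show that, after passing to a dense subfamily of good \(h\) and adjusting each \(q_h\) by a lower-order term, the map \(h \mapsto \beta_h\) becomes genuinely additive, i.e.\ \(\beta_{h_1+h_2} = \beta_{h_1} + \beta_{h_2}\). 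One extracts an approximate version of this functional equation by a Gowers--Cauchy--Schwarz in the \(h\)-variable, and then a polynomial-bound Balog--Szemerédi--Gowers argument together with a (bilinear) Bogolyubov-type step upgrade ``approximately additive on a dense set'' to ``exactly additive on a subspace of bounded codimension,'' yielding a trilinear form \(T(h,x,y) := \beta_h(x,y)\) that is linear in \(h\).

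Finally I would run the symmetry argument: because \(\beta_h\) arises as a mixed second derivative of \(f\), comparing the orders in which the derivatives are taken forces \(T\) to be symmetric in its three arguments up to a controlled error (corrected using that \(3!\) is invertible mod \(p\)). A symmetric trilinear form in characteristic \(>3\) is exactly the total third derivative of the cubic \(q(x) := \tfrac{1}{6}T(x,x,x)\) plus lower-order terms, so one antidifferentiates to define \(q\), and one more Cauchy--Schwarz over \(h\) gives \(\bigl|\mathbb{E}_x f(x)\, e(q(x))\bigr| \geq \exp(-\log^{O(1)}(1/c))\), as claimed.

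The hard part is the quantitative efficiency rather than the structural skeleton, which is classical. Every standard instance of this scheme loses a tower-type (or at best double-exponential) factor --- the \(U^3\) inverse theorem when proved via the arithmetic regularity lemma, the Balog--Szemerédi--Gowers step, and the conversion of approximate symmetry to exact symmetry. The genuine content of Milićević's argument is to replace all of these by quasipolynomial-loss substitutes (an efficient \(U^3\) inverse theorem avoiding regularity lemmas, a polynomial-bound bilinear Bogolyubov theorem, and a quantitatively controlled degree-lowering/symmetrization step), while ensuring that the density of surviving good shifts \(h\) decays only quasipolynomially through the entire pipeline. Making all of these efficiency gains compose so that the final correlation is \(\exp(-\log^{O(1)}(1/c))\) rather than an iterated exponential is where essentially all the difficulty lies.
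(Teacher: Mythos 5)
The paper does not prove this statement at all: it is imported verbatim as an external black box (Theorem~2 of Milićević's quasipolynomial inverse theorem paper) and only its contrapositive is used downstream, so there is no in-paper proof to compare your outline against. That said, your sketch is a faithful account of the actual Gowers--Milićević strategy: the identity \(\|f\|_{U^4}^{16}=\mathbb{E}_h\|\Delta_h f\|_{U^3}^{8}\), the averaging to a dense set of good shifts, the application of a quantitative \(U^3\) inverse theorem to each \(\Delta_h f\), the linearization of \(h\mapsto \beta_h\) via Cauchy--Schwarz, Balog--Szemerédi--Gowers and a polynomial-bound bilinear Bogolyubov step, and the final symmetrization/antidifferentiation --- and you correctly locate the real content in making every stage quasipolynomially efficient rather than in the structural skeleton.

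One caveat worth flagging: the theorem as quoted holds for \emph{all} primes \(p\) and accordingly outputs a \emph{possibly non-classical} cubic \(q:\mathbb{F}_p^n\to\mathbb{T}\), whereas your outline explicitly assumes \(p>3\) (invertibility of \(3!\), no non-classical polynomials in the \(U^3\) step) and so only covers the high-characteristic case. For \(p=2,3\) the antidifferentiation step cannot divide by \(6\) and one is forced into the non-classical polynomial formalism; since the surrounding paper invokes this theorem for arbitrary \(p\) (and separately invokes the \(\mathbb{F}_2^n\) results for \(U^5,U^6\)), the low-characteristic branch is not a dispensable technicality here. Your sketch should either restrict its claimed scope or indicate how the symmetrization is carried out over \(\mathbb{T}\)-valued non-classical polynomials in small characteristic.
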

We may also restate the above theorem in contrapositive form, which is more suitable for the design of structure-testing algorithms:
\begin{corollary}[Contrapositive form]\label{cor:4}
Let \( f : \mathbb{F}_p^n \to \mathbb{C} \) with \( |f(x)| \leq 1 \), and suppose that for all cubic phase polynomials \( q : \mathbb{F}_p^n \to \mathbb{T} \), we have
\begin{align}
\left| \mathbb{E}_{x \in \mathbb{F}_p^n} f(x) e(q(x)) \right| \leq \delta.
\end{align}
Then
\begin{align}
\|f\|_{U^4} \leq \exp\left( -(\log(1/\delta))^{\Omega(1)} \right).
\end{align}
\end{corollary}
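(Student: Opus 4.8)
\textbf{Proof proposal for Corollary~\ref{cor:4}.}

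The plan is to derive the statement as the straightforward contrapositive of the preceding inverse theorem, with the only nontrivial point being the bookkeeping that turns an $\log^{O(1)}$ bound into an $\exp(-(\cdot)^{\Omega(1)})$ bound. First I would set $c := \|f\|_{U^4}$; if $c = 0$ the conclusion is vacuous, so assume $c > 0$. Applying the inverse theorem to $f$ (whose hypotheses $|f| \le 1$ are inherited), there exists a possibly non-classical cubic polynomial $q : \mathbb{F}_p^n \to \mathbb{T}$ with
\begin{align}
\left| \mathbb{E}_{x \in \mathbb{F}_p^n} f(x) e(q(x)) \right| \;\ge\; \exp\!\left(-\log^{C}(1/c)\right)
\end{align}
for some absolute constant $C$ (the $O(1)$ exponent from the theorem).

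Next I would invoke the hypothesis of the corollary: since the displayed inequality holds for \emph{this particular} cubic phase polynomial $q$, and the hypothesis bounds the correlation of $f$ with \emph{every} cubic phase polynomial by $\delta$, we get
\begin{align}
\delta \;\ge\; \exp\!\left(-\log^{C}(1/c)\right).
\end{align}
Taking logarithms of both sides and rearranging yields $\log(1/\delta) \le (\log(1/c))^{C}$, hence $\log(1/c) \ge (\log(1/\delta))^{1/C}$, and therefore
\begin{align}
\|f\|_{U^4} = c \;\le\; \exp\!\left(-(\log(1/\delta))^{1/C}\right) = \exp\!\left(-(\log(1/\delta))^{\Omega(1)}\right),
\end{align}
which is exactly the claimed bound.

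There is essentially no obstacle here: the argument is a pure logical contrapositive of a quantitative statement, and the only thing one must take care with is that inverting the chain of inequalities sends the $O(1)$ in the exponent of the $\log$ to an $\Omega(1)$ in the final bound (the reciprocal $1/C$ of the constant $C$ from the inverse theorem). I would also remark, for clarity, that $\delta$ should be taken small enough that $\log(1/\delta) > 0$, i.e.\ $\delta < 1$, which is the only regime in which the statement is meaningful; for $\delta \ge 1$ the conclusion is trivial since $\|f\|_{U^4} \le 1$ always. No appeal to additional machinery beyond the quoted theorem is needed.
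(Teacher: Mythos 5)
Your proof is correct and matches the paper's (implicit) argument exactly: the paper simply asserts Corollary~\ref{cor:4} as the contrapositive restatement of the quoted $U^4$ inverse theorem, and your write-up supplies precisely that bookkeeping, correctly inverting $\log(1/\delta) \le (\log(1/c))^{C}$ to obtain the $\exp(-(\log(1/\delta))^{1/C})$ bound. The edge-case remarks ($c=0$, $\delta \ge 1$) are sensible additions but change nothing of substance.
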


For $p=2, $ Quantitative inverse theorem for Gowers uniformity norms for $U^5, U^6$ can be also obtained. Similariy, Corollary 6 in \cite{milicevic2022quantitative} 
\begin{theorem}[Inverse theorem for \( U^k \) norm over \( \mathbb{F}_2^n \) for \( k = 5, 6 \)]
Let \( k \in \{5, 6\} \), and let \( f : \mathbb{F}_2^n \to \mathbb{C} \) be a bounded function with \( |f(x)| \leq 1 \), and suppose that
\begin{align}
\|f\|_{U^k} \geq c.
\end{align}
Then there exists a generalized polynomial \( q : \mathbb{F}_2^n \to \mathbb{T} \) of degree at most \( k - 1 \) such that
\begin{align}
\left| \mathbb{E}_{x \in \mathbb{F}_2^n} f(x) \cdot \exp(2\pi i q(x)) \right| \geq \left(\exp^{(O(1)} (O(1/c))\right)^{-1}.
\end{align}
\end{theorem}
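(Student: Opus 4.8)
The statement is Corollary~6 of \cite{milicevic2022quantitative}, so the plan is to reconstruct its argument from the quantitative higher-order Fourier-analytic machinery over \( \mathbb{F}_2^n \), proceeding by induction on \( k \) and treating \( k \in \{5,6\} \) as two steps above the quantitative \( U^4 \) inverse theorem (the \( \mathbb{F}_2 \)-analogue of Corollary~\ref{cor:4}). The three ingredients I would assemble are: (i) a derivative-descent reduction of \( U^k \) to \( U^{k-1} \); (ii) the inductive (quantitative) inverse theorem for the lower norm; and (iii) a symmetrization step that patches the resulting family of low-degree phases into a single generalized polynomial of degree \( \le k-1 \).

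\emph{Descent and inductive inverse theorem.} Starting from \( \|f\|_{U^k} \ge c \), I would use the identity
\begin{align}
\|f\|_{U^k}^{2^k} = \mathbb{E}_{h\in\mathbb{F}_2^n}\,\|\Delta_h f\|_{U^{k-1}}^{2^{k-1}}
\end{align}
and a Markov/popularity argument to extract a set \( H \subseteq \mathbb{F}_2^n \) with \( |H| \ge \tfrac12 c^{2^k}\, 2^n \) on which \( \|\Delta_h f\|_{U^{k-1}} \ge c' \), where \( c' = (\tfrac12 c^{2^k})^{1/2^{k-1}} \). Applying the quantitative \( U^{k-1} \) inverse theorem to each \( \Delta_h f \) with \( h \in H \) then yields a generalized polynomial \( \psi_h : \mathbb{F}_2^n \to \mathbb{T} \) of degree \( \le k-2 \) with \( |\mathbb{E}_x\,\Delta_h f(x)\, e(-\psi_h(x))| \ge \kappa \), where \( \kappa \) is the inverse-theorem output on input \( c' \), already of the form \( (\exp^{(O(1))}(O(1/c)))^{-1} \). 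The key point of the bookkeeping is that the iterated-exponential loss is incurred \emph{once} per level, so a fixed \( k \in \{5,6\} \) keeps it \( O(1) \)-fold.

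\emph{Cocycle extraction, linearization, and integration.} Writing \( \Delta_h f(x) = f(x)\overline{f(x+h)} \), Step~(ii) says \( f \) is locally modelled by \( e(\psi_h) \) up to a translation. A double count over pairs \( h, h' \in H \) together with the Gowers--Cauchy--Schwarz inequality forces the top-degree part of the cocycle \( \psi_h + \psi_{h'} - \psi_{h+h'} \) to be negligible on a dense set, so \( h \mapsto (\text{top part of }\psi_h) \) behaves like an approximate polynomial homomorphism of degree \( \le k-3 \) into the coefficient space. I would then invoke the polynomial-bound Freiman--Ruzsa theorem over \( \mathbb{F}_2^n \), together with the iterated degree-lowering lemmas of \cite{milicevic2022quantitative}, to replace this approximate homomorphism by a genuine one on a large subspace, integrate it to a single generalized polynomial \( q \) of degree \( \le k-1 \) with \( \Delta_h q = \psi_h \) for a positive-density set of \( h \), and finish with one last Cauchy--Schwarz that converts ``\( f \) correlates with \( e(\psi_h) \) for many \( h \)'' into the desired lower bound on \( |\mathbb{E}_x f(x) e(-q(x))| \).

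\emph{Main obstacle.} The delicate part is the third step: over \( \mathbb{F}_2 \) one genuinely needs non-classical (bracket) polynomials rather than classical \( \mathbb{F}_2 \)-polynomials once \( k \ge 4 \), and each linearization/integration step applies a Freiman-type structure theorem whose naive iteration stacks into a tower of exponentials in \( k \). The technical heart of \cite{milicevic2022quantitative} is organizing the induction and the symmetrization so that, for fixed \( k \in \{5,6\} \), only an \( O(1) \)-fold iterated exponential in \( 1/c \) survives --- exactly the bound recorded in the statement --- and I expect reproducing that quantitative bookkeeping, rather than any single inequality, to be the real work.
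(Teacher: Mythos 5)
The paper does not prove this statement at all: it is imported verbatim as Corollary~6 of \cite{milicevic2022quantitative}, and the surrounding text explicitly treats it as an external black box, so there is no in-paper proof to compare yours against. Your sketch is therefore being measured against the actual literature, and at that level it is a reasonable roadmap of the standard induction-on-$k$ strategy (derivative descent via $\|f\|_{U^k}^{2^k}=\mathbb{E}_h\|\Delta_h f\|_{U^{k-1}}^{2^{k-1}}$, popularity selection of $H$, inverse theorem at level $k-1$, then symmetrization/integration of the family $\{\psi_h\}$ into a single non-classical phase), and you correctly identify both the need for non-classical polynomials over $\mathbb{F}_2$ and the fact that the bound-bookkeeping is the real content.

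That said, you should be aware of two things. First, your third step substitutes a generic appeal to ``polynomial-bound Freiman--Ruzsa plus degree-lowering lemmas'' for what is actually the technical core of Mili\'cevi\'c's argument, namely the quantitative structure theory of multiaffine and multilinear varieties over $\mathbb{F}_p^n$ (bilinear and multiaffine Bogolyubov-type theorems and the solution of the associated symmetry problem); it is precisely this machinery, not PFR iteration, that keeps the loss at a bounded number of exponentials rather than a tower, so the one sentence where you claim the iterated-exponential loss is ``incurred once per level'' is asserting the theorem's hardest quantitative content rather than deriving it. Second, since the paper only cites this result, the honest thing to do in this context is what the paper does: state it with attribution. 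A two-paragraph sketch that defers every genuinely difficult step to \cite{milicevic2022quantitative} is not a proof and should not be presented as one; if you want to include it, label it as a summary of the proof strategy of the cited work.
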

Similarly:
\begin{corollary}[Contrapositive form]\label{cor:56}
Let \( f : \mathbb{F}_2^n \to \mathbb{C} \), with \( |f(x)| \leq 1 \), and suppose that for all generalized polynomials \( q : \mathbb{F}_2^n \to \mathbb{T} \) of degree at most \( k - 1 \), we have
\begin{align}
\left| \mathbb{E}_{x \in \mathbb{F}_2^n} f(x) \cdot \exp(2\pi i q(x)) \right| \leq \delta.
\end{align}
Then
\begin{align}
\|f\|_{U^k} \leq \exp\left( -\mathrm{polylog}(1/\delta) \right).
\end{align}
\end{corollary}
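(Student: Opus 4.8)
The plan is to obtain Corollary~\ref{cor:56} as the straight logical contrapositive of the $U^5/U^6$ inverse theorem stated immediately above it; the only genuine content is the algebraic inversion of the quantitative correlation bound. Write $c := \|f\|_{U^k}$ and argue by contrapositive: suppose $c$ \emph{exceeds} the threshold $\exp(-\mathrm{polylog}(1/\delta))$ that we will pin down. Applying the inverse theorem to $f$ (which is bounded, so the hypothesis $|f|\le 1$ is met) produces a generalized polynomial $q:\mathbb{F}_2^n\to\mathbb{T}$ of degree at most $k-1$ with $\left|\mathbb{E}_x f(x)\exp(2\pi i q(x))\right|\ge G(c)^{-1}$, where $G$ is the growth function from the theorem (an $O_k(1)$-height iterated exponential of a $\mathrm{polylog}(1/c)$-type argument). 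It then suffices to have chosen the threshold so that $G(c)^{-1}>\delta$: this directly contradicts the assumption that every degree-$(k-1)$ generalized polynomial correlates with $f$ by at most $\delta$, completing the contrapositive.

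The second step is the inversion itself, i.e. solving $G(c)\le 1/\delta$ for $c$. Since $G(c)$ has the form $\exp$ of a bounded tower in $\log(1/c)$, taking logarithms the appropriate (constant) number of times converts $G(c)\le 1/\delta$ into an inequality of the shape $\log(1/c)\ge(\log(1/\delta))^{\Omega_k(1)}$, and re-exponentiating yields exactly $c\le\exp\!\big(-(\log(1/\delta))^{\Omega_k(1)}\big)=\exp(-\mathrm{polylog}(1/\delta))$. Declaring the corollary's threshold to be this value closes the loop. No property of $f$ beyond boundedness is used, and the family of competitor functions (generalized polynomials of degree $\le k-1$) is literally the same on both sides, so the reduction is lossless.

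I do not expect a real obstacle: the single place needing care is bookkeeping of the implied constants $O_k(1)$ as they pass through the iterated logarithms, so that the exponent $\Omega_k(1)$ inside $\mathrm{polylog}(1/\delta)$ is honestly positive and independent of $n$. One should also state explicitly that the corollary inherits the hypotheses of its source theorem — base field $\mathbb{F}_2$ and $k\in\{5,6\}$ — and that ``generalized polynomial'' (rather than classical polynomial) is unavoidable here, since we are in the low-characteristic regime $p=2<k$, exactly as flagged after Theorem~\ref{thm:inverse}.
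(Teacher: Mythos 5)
Your overall route is exactly the one the paper intends: Corollary~\ref{cor:56} is stated with no independent proof (the paper writes only ``Similarly:''), and it is meant to be the literal contrapositive of the $U^5/U^6$ inverse theorem, with the quantitative content coming entirely from inverting the correlation lower bound. Your first paragraph (apply the inverse theorem to a hypothetical $f$ with large $\|f\|_{U^k}$, contradict the uniform correlation bound $\delta$) is correct and is the same argument.

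The one step I would push back on is the inversion in your second paragraph. You describe $G(c)$ as ``an $O_k(1)$-height iterated exponential'' and then claim that taking logarithms the appropriate constant number of times converts $G(c)\le 1/\delta$ into $\log(1/c)\ge(\log(1/\delta))^{\Omega_k(1)}$. That is only true if the tower has height one, i.e.\ $G(c)=\exp\bigl(\log^{O(1)}(1/c)\bigr)$. If the height is $m\ge 2$, each successive logarithm applied to $1/\delta$ produces $\log(1/\delta)$, then $\log\log(1/\delta)$, and so on: you end up with $\log(1/c)\lesssim\bigl(\log^{(m)}(1/\delta)\bigr)^{O(1)}$ and hence only $\|f\|_{U^k}\le\exp\bigl(-\mathrm{poly}(\log^{(m)}(1/\delta))\bigr)$, which is much weaker than $\exp(-\mathrm{polylog}(1/\delta))$. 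The theorem as printed in the paper gives the correlation as $\bigl(\exp^{(O(1))}(O(1/c))\bigr)^{-1}$ — an iterated exponential of $1/c$ itself, not of $\log(1/c)$ — and under that reading the contrapositive would only yield $\|f\|_{U^k}\le O\bigl(1/\log^{(m)}(1/\delta)\bigr)$. So either the source theorem must be read in its quasipolynomial form (single exponential of $\mathrm{polylog}(1/c)$), in which case your inversion goes through verbatim as in Corollary~\ref{cor:4}, or the corollary's stated bound does not follow. This discrepancy is inherited from the paper's own statements, but your proof should not paper over it: you need to fix the precise shape of $G$ before claiming the $\mathrm{polylog}$ exponent, since the conclusion is sensitive to whether the outermost structure is $\exp(\mathrm{polylog}(1/c))$ or a genuine tower.
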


Combining Corollaries~\ref{cor:4} and~\ref{cor:56} with the quantum detection framework in Theorem~\ref{thm:quantum-detection}, we obtain the following result, which establishes a quasipolynomial-time quantum algorithm for approximate phase polynomial detection:
\begin{theorem}[Quasipolynomial-Time Quantum Detection of Degree-\( d \) Phase Polynomials]\label{thm:qn}
Let \( d = 3 \) and \( p \) be any prime, or let \( d = 4, 5 \) with \( p = 2 \). Suppose \( f : \mathbb{F}_p^n \to \mathbb{C} \) is a function with \( |f(x)| = 1 \) for all \( x \), accessible via a quantum phase oracle. Then there exists a quantum algorithm that, with error at most \( \eta \), distinguishes between the following two cases:

\begin{enumerate}
    \item[(Yes)] \( f(x) = \omega^{P(x)} \) for some degree-\( d \) polynomial \( P : \mathbb{F}_p^n \to \mathbb{F}_p \);
    \item[(No)] \( f \) is \( \epsilon \)-far from every degree-\( d \) (possibly non-classical) phase polynomial.
\end{enumerate}

The algorithm requires
\begin{align}
O\left(\frac{1}{\Delta^2} \log \frac{1}{\eta} \right) \quad \text{measurements, where} \quad \Delta := 1 - \exp\left( -\mathrm{polylog}(1/\epsilon) \right)^{2^{d+2}},
\end{align}
and each measurement is performed on a quantum state prepared using \( 2^d \) queries to the oracle \( U_f \), along with \( d+1 \) quantum Fourier transforms over \( \mathbb{F}_p^n \).
\end{theorem}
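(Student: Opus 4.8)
The plan is to combine the quantum norm-estimation subroutine of Theorem~\ref{thm:qk} with the contrapositive inverse theorems (Corollaries~\ref{cor:4} and~\ref{cor:56}) in exactly the way that Theorem~\ref{thm:qn} (the generic detection theorem) already does, but now instantiated at the specific $(d,p)$ pairs for which a quasipolynomial bound on $\delta$ is available. First I would invoke Theorem~\ref{thm:qk} to prepare, using $2^d$ phase-oracle queries and $d+1$ quantum Fourier transforms over $\mathbb{F}_p^n$, a state whose all-zero measurement probability is exactly $\|f\|_{U^{d+1}}^{2^{d+1}}$. In the (Yes) case the Direct Theorem gives $\|f\|_{U^{d+1}}=1$, so this probability equals $1$. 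In the (No) case, $f$ is $\epsilon$-far from all degree-$d$ (possibly non-classical) phase polynomials, so the relevant contrapositive inverse theorem applies: for $d=3$ and any prime $p$ (taking $p$ large enough to be in the high-characteristic regime, or using the $\mathbb{F}_p^n$ statement directly) Corollary~\ref{cor:4} gives $\|f\|_{U^{4}}\leq \exp(-\mathrm{polylog}(1/\epsilon))$; for $d=4,5$ with $p=2$, Corollary~\ref{cor:56} gives $\|f\|_{U^{d+1}}\leq \exp(-\mathrm{polylog}(1/\epsilon))$. Writing $\delta := \exp(-\mathrm{polylog}(1/\epsilon))$, the (No) acceptance probability is at most $\delta^{2^{d+1}}$.

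Next I would carry out the statistical separation exactly as in the proof of the generic Theorem~\ref{thm:qn}: we must distinguish a Bernoulli parameter $p_{\mathrm{acc}}=1$ from $p_{\mathrm{acc}}\le \delta^{2^{d+1}}$. Taking $m$ independent measurements and thresholding the empirical frequency at, say, $1-\tfrac{\Delta}{2}$ with $\Delta := 1-\delta^{2^{d+2}}$ (the extra factor of $2$ in the exponent is harmless slack that absorbs the gap between $\delta^{2^{d+1}}$ and the threshold), a Chernoff bound gives failure probability $\le 2\exp(-m\Delta^2/2)$, so $m = O(\Delta^{-2}\log(1/\eta))$ suffices for total error $\le \eta$. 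Multiplying by the per-measurement cost of $2^d$ oracle queries and $d+1$ QFTs yields the stated resource bounds; since $d$ is a fixed constant and $\Delta = 1-\exp(-\mathrm{polylog}(1/\epsilon))^{2^{d+2}}$ is bounded away from $0$ by a quasipolynomially small quantity (so $1/\Delta^2$ is at most quasipolynomial in $1/\epsilon$), the whole algorithm runs in quasipolynomial time in $1/\epsilon$ and polynomial time in $n$.

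The one genuine subtlety — and the step I expect to be the main obstacle to state cleanly rather than to prove — is making the (No) case hypothesis match the hypothesis of the inverse theorems. The inverse theorems of Milićević are stated for correlation with \emph{generalized} (non-classical) phase polynomials of degree at most $d$, whereas a naive ``$\epsilon$-far from degree-$d$ phase polynomials'' might be read as referring only to classical polynomials $P:\mathbb{F}_p^n\to\mathbb{F}_p$. To avoid any gap I would state the (No) case, as the theorem does, as $\epsilon$-far from \emph{every degree-$d$ (possibly non-classical) phase polynomial}, so that the contrapositive applies verbatim; for $p>d$ (e.g.\ $d=3$, $p\ge 5$) there are no non-classical polynomials of degree $d$ and the two notions coincide, so in that regime the cleaner classical statement is recovered. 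A secondary point to be careful about is that the quasipolynomial bound's implied constants in the $O(1)$ and $\Omega(1)$ exponents are not explicit, so $\Delta$ — and hence the measurement count — is only known up to these unspecified polylogarithmic exponents; this is already flagged in the surrounding text and does not affect the quasipolynomial claim.
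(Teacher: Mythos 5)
Your proposal is correct and follows essentially the same route the paper intends: the paper gives no separate proof of this theorem, simply asserting that it follows by combining Corollaries~\ref{cor:4} and~\ref{cor:56} with the quantum detection framework (i.e.\ the state preparation of Theorem~\ref{thm:qk} plus the Chernoff-bound argument from the generic detection theorem), which is exactly what you carry out. Your explicit handling of the amplitude-versus-probability exponent ($2^{d+1}$ vs.\ $2^{d+2}$) and of the classical-versus-non-classical polynomial hypothesis in the (No) case is, if anything, more careful than what appears in the paper.
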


\paragraph{Remark}
To illustrate the asymptotic behavior of our detection algorithm, consider the case when the function \( f \) is \( \epsilon \)-far from any degree-\( d \) phase polynomial, with 
\begin{align}
\epsilon = 1 - \frac{1}{\mathrm{poly}(n)}.
\end{align}
In the linear case \( (d=1) \), the inverse theorem guarantees the detection algorithm distinguishes the YES and NO cases with gap
\begin{align}
\Delta = 1 - \|f\|_{U^2}^2 = 1 - \epsilon^4 = \Theta\left( \frac{1}{\mathrm{poly}(n)} \right).
\end{align}
This leads to an overall query and measurement complexity of
\begin{align}
O\left( \frac{1}{\Delta^2} \log \frac{1}{\eta} \right) = \mathrm{poly}(n),
\end{align}
thus yielding a polynomial-time quantum algorithm.

However, in the higher-order case \( d \geq 3 \), the best known inverse theorems only guarantee
\begin{align}
\|f\|_{U^{d+1}} \leq \delta(\epsilon) \leq \exp\left( -\log^{O(1)}(1/\epsilon) \right),
\end{align}
so that
\begin{align}
\Delta = 1 - \delta(\epsilon)^{2^{d+2}} = 1 - \exp\left( -\log^{O(1)}(1/\epsilon) \right)^{2^{d+2}} = \exp\left( -\log^{O(1)} n \right).
\end{align}
This implies a total complexity of
\begin{align}
O\left( \frac{1}{\Delta^2} \log \frac{1}{\eta} \right) = \exp\left( \log^{O(1)} n \right),
\end{align}
which is quasipolynomial in \( n \). Therefore, even though the quantum algorithm remains conceptually simple and efficient in structure, the current bounds on inverse theorems render the algorithm quasipolynomial in the worst case. This justifies the terminology used in Theorem~\ref{thm:qn}.

We emphasize that this quasipolynomial complexity arises solely from the current limitations in the known inverse theorems for higher-order Gowers norms. Any future improvement in the quantitative bounds of these inverse theorems. For example, establishing that \( \delta(\epsilon) \geq \mathrm{poly}(\epsilon) \) would immediately improve \( \Delta \) and reduce the complexity of our quantum detection algorithm to polynomial time. Thus, the algorithm provides a robust framework that can directly benefit from advances in higher-order Fourier analysis.

We also note that parts of the quasipolynomial bounds over \( \mathbb{Z}/N\mathbb{Z} \) have been extended to the \( U^4 \) norm~\cite{leng2023efficient}, and more recently to general \( U^k \) norms by James Leng, Ashwin Sah, and Mehtaab Sawhney~\cite{leng2024quasipolynomial}. However, for the \( U^3 \) norm, obtaining quantitative bounds remains a major open challenge.

We now move to another interesting applications. The analysis of 3-term arithmetic progressions via the \( U^2 \) norm not only illustrates the power of quantum algorithms informed by Fourier structure, but also lays the groundwork for extending such techniques to more complex patterns. In particular, longer arithmetic progressions and other additive configurations require higher-order Gowers norms, such as \( U^3 \), \( U^4 \), and beyond. In the next section, we investigate the potential and limitations of such generalizations, both from a classical and quantum perspective.

\subsection{Counting for three terms Arthimetic Progressions in $\mathbb{F}_p^n$}\label{subsec:c}
Arithmetic progressions are among the most fundamental objects studied in additive combinatorics and analytic number theory, and they have been a central source of motivation for the development of higher-order Fourier analysis. Historically, Roth's celebrated theorem \cite{Roth53} shows that any subset of the positive integers with positive upper density must contain nontrivial 3-term arithmetic progressions. The same statement holds over finite fields \( \mathbb{F}_p^n \) for \( p \geq 3 \), where it can be proved using standard Fourier analytic techniques.

Later, Szemerédi \cite{Szemeredi75} generalized this result to arbitrary-length arithmetic progressions, introducing the powerful method of density increments. It was eventually recognized that linear Fourier analysis is insufficient to fully control configurations such as 3- and 4-term progressions in more complex settings. This realization led to Gowers' foundational work \cite{Gowers01}, which introduced the \( U^k \) norms as a means to quantitatively capture the uniformity of functions and bound the number of arithmetic progressions.

While further generalizations over cyclic groups require the use of tools such as nilsequences—which are beyond the scope of this work—the behavior of Gowers norms in finite field settings like \( \mathbb{F}_p^n \) remains tractable and rich. In this section, we explore how Gowers norms can be used to analyze arithmetic progressions and how such structure can inform the design of quantum algorithms for property testing problems related to additive combinatorics.

We start the discussion recall the simplisest identity.
\begin{lemma}[Fourier Identity for 3-APs]\label{lem:3ap}
Let \( G = \mathbb{F}_p^n \) be a finite abelian group, and let \( f, g, h : G \to \mathbb{C} \) be bounded functions. Then the following identity holds:
\begin{align}
\mathbb{E}_{x, y \in G} f(x) g(x+y) h(x+2y) = \sum_{\gamma \in \hat{G}} \hat{f}(\gamma)\hat{g}(-2\gamma)\hat{h}(\gamma),
    \end{align}
where \( \hat{f}(\gamma) := \mathbb{E}_{x \in G} f(x) \overline{\chi_\gamma(x)} \) denotes the Fourier coefficient of \( f \) at character \( \gamma \in \hat{G} \), and \( \chi_\gamma \) is the canonical additive character of \( G \).
\end{lemma}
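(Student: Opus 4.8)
The plan is to prove the identity by straightforward Fourier inversion: expand each of $f$, $g$, $h$ in the additive character basis $\{\chi_\gamma\}_{\gamma \in \hat G}$, substitute into the left-hand side, interchange summation with averaging, and then collapse the resulting triple sum over frequencies using orthogonality of characters.

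First I would write $f(x) = \sum_{\alpha \in \hat G} \hat{f}(\alpha)\chi_\alpha(x)$, and likewise $g = \sum_{\beta} \hat{g}(\beta)\chi_\beta$ and $h = \sum_{\gamma}\hat{h}(\gamma)\chi_\gamma$; this is legitimate since $\hat{f}(\gamma) = \mathbb{E}_x f(x)\overline{\chi_\gamma(x)}$ is precisely the Fourier-coefficient convention already fixed in Section~\ref{sec:prelim} and the $\chi_\gamma$ form an orthonormal basis of $\mathcal{F}(G)$. The one algebraic manipulation that needs attention is pushing the shifts through the characters via the homomorphism property: $\chi_\beta(x+y) = \chi_\beta(x)\chi_\beta(y)$ and $\chi_\gamma(x+2y) = \chi_\gamma(x)\,\chi_\gamma(y)^2 = \chi_\gamma(x)\,\chi_{2\gamma}(y)$, where the last step uses that over $G = \mathbb{F}_p^n$ the pairing is bilinear, $\chi_\gamma(x) = \omega^{\langle \gamma, x\rangle}$, so ``doubling the frequency'' is well-defined. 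Since every sum is finite, all interchanges of $\sum$ and $\mathbb{E}$ are automatic, and the left-hand side becomes
\[
\sum_{\alpha,\beta,\gamma \in \hat G} \hat{f}(\alpha)\hat{g}(\beta)\hat{h}(\gamma)\,\Big(\mathbb{E}_{x\in G}\,\chi_{\alpha+\beta+\gamma}(x)\Big)\Big(\mathbb{E}_{y\in G}\,\chi_{\beta+2\gamma}(y)\Big).
\]
Next I would apply orthogonality, namely that $\mathbb{E}_{z\in G}\chi_\mu(z)$ equals $1$ if $\mu = 0$ and $0$ otherwise. This imposes the two linear constraints $\alpha+\beta+\gamma = 0$ and $\beta+2\gamma = 0$; solving them gives $\beta = -2\gamma$ and then $\alpha = -\beta-\gamma = \gamma$, so only the diagonal family indexed by $\gamma$ survives. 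Substituting back leaves $\sum_{\gamma\in\hat G}\hat{f}(\gamma)\hat{g}(-2\gamma)\hat{h}(\gamma)$, which is exactly the asserted identity.

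I do not expect a genuine obstacle here: the computation is routine, and the only subtlety is the identification $\chi_\gamma(2y) = \chi_{2\gamma}(y)$, which is precisely why the lemma is stated over $\mathbb{F}_p^n$. For a general finite abelian group one would instead carry along the dual map of the endomorphism $y \mapsto 2y$ of $G$ acting on $\hat G$, and the identity would persist with $\hat g$ re-indexed accordingly --- with the usual caveat that $\gamma \mapsto 2\gamma$ need not be a bijection when $p = 2$. Nothing else in the argument is sensitive to the group, and the same three-step scheme (expand, interchange, orthogonality) also handles the weighted variants used in the $U^2$-based 3-AP counting of Subsection~\ref{subsec:c}.
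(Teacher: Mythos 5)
Your proposal is correct and follows essentially the same route as the paper's proof: Fourier-expand $f$, $g$, $h$, push the shifts through the characters via the homomorphism property, and use orthogonality to impose the constraints $\alpha+\beta+\gamma=0$ and $\beta+2\gamma=0$, which collapse the triple sum to $\sum_{\gamma}\hat f(\gamma)\hat g(-2\gamma)\hat h(\gamma)$. Your added remark on why the identification $\chi_\gamma(2y)=\chi_{2\gamma}(y)$ ties the statement to $\mathbb{F}_p^n$ is a correct observation beyond what the paper records, but the core argument is identical.
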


\begin{proof}
We begin by expanding each function using its Fourier expansion:
\begin{align}
f(x) = \sum_{\alpha} \hat{f}(\alpha) \chi_\alpha(x), \quad
g(x+y) = \sum_{\beta} \hat{g}(\beta) \chi_\beta(x+y), \quad
h(x+2y) = \sum_{\delta} \hat{h}(\delta) \chi_\delta(x+2y).
\end{align}
Then, the left-hand side becomes:
\begin{align}
\mathbb{E}_{x,y} f(x) g(x+y) h(x+2y)
&= \mathbb{E}_{x,y} \left( \sum_{\alpha} \hat{f}(\alpha) \chi_\alpha(x) \right)
\left( \sum_{\beta} \hat{g}(\beta) \chi_\beta(x+y) \right)
\left( \sum_{\delta} \hat{h}(\delta) \chi_\delta(x+2y) \right) \nonumber \\
&= \sum_{\alpha, \beta, \delta} \hat{f}(\alpha) \hat{g}(\beta) \hat{h}(\delta)
\mathbb{E}_{x,y} \chi_{\alpha+\beta+\delta}(x) \chi_{\beta+2\delta}(y),
\end{align}
since \( \chi_\beta(x+y) = \chi_\beta(x)\chi_\beta(y) \), and similarly for \( \chi_\delta(x+2y) \).

Now use orthogonality of characters:
\begin{align}
\mathbb{E}_{x \in G} \chi_\lambda(x) =
\begin{cases}
1 & \text{if } \lambda = 0, \\
0 & \text{otherwise}.
\end{cases}
\end{align}
Therefore, the expectation is non-zero if and only if:
\begin{align}
\alpha + \beta + \delta = 0 \quad \text{and} \quad \beta + 2\delta = 0.
\end{align}
Solving these equations gives:
$\delta = \gamma, \quad \beta = -2\gamma, \quad \alpha = \gamma.$ Thus, we obtain:
\begin{align}
\sum_{\gamma \in \hat{G}} \hat{f}(\gamma) \hat{g}(-2\gamma) \hat{h}(\gamma),
\end{align}
as desired.
\end{proof}

\begin{theorem}[Equivalence Between Gowers \( U^2 \) Norm and 3-AP Bias]
\label{thm:u2-3ap-correct}
Let \( f : \mathbb{F}_p^n \to \mathbb{C} \) with \( \|f\|_\infty \leq 1 \), and define
\begin{align}
T(f) := \mathbb{E}_{x,d} f(x) f(x+d) f(x+2d).
\end{align}
Then:
\begin{enumerate}
    \item[(i)] If \( \|f\|_{U^2} \geq \varepsilon \), then
\begin{align}
    |T(f)| \geq \varepsilon^5.
\end{align}
    \item[(ii)] If \( |T(f)| = \alpha \), then
\begin{align}
    \|f\|_{U^2} \geq \alpha^{1/2}.
\end{align}
\end{enumerate}
\end{theorem}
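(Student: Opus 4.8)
The plan is to handle both parts on the Fourier side. Specialising Lemma~\ref{lem:3ap} to \( f = g = h \) gives the identity
\begin{equation}
T(f) = \sum_{\gamma \in \widehat{G}} \widehat{f}(\gamma)^2\, \widehat{f}(-2\gamma),
\end{equation}
and I will combine it with the two facts recorded in Section~\ref{sec:prelim}: the formula \( \|f\|_{U^2}^4 = \sum_{\gamma \in \widehat{G}} |\widehat{f}(\gamma)|^4 \) and Parseval \( \sum_{\gamma \in \widehat{G}} |\widehat{f}(\gamma)|^2 = \mathbb{E}_x |f(x)|^2 \le 1 \). Here \( p \) is odd, so that \( \gamma \mapsto -2\gamma \) is a bijection of \( \widehat{G} \cong \mathbb{F}_p^n \); this is used twice below.

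For part (ii) I would bound \( T(f) \) termwise and then apply Cauchy--Schwarz in the \( \gamma \)-sum:
\begin{equation}
|T(f)| \le \sum_{\gamma} |\widehat{f}(\gamma)|^2\, |\widehat{f}(-2\gamma)| \le \Big( \sum_{\gamma} |\widehat{f}(\gamma)|^4 \Big)^{1/2} \Big( \sum_{\gamma} |\widehat{f}(-2\gamma)|^2 \Big)^{1/2}.
\end{equation}
The first factor is exactly \( \|f\|_{U^2}^2 \), and the second equals \( \big( \sum_{\gamma} |\widehat{f}(\gamma)|^2 \big)^{1/2} \le 1 \) after re-indexing by \( \gamma \mapsto -2\gamma \) and using Parseval with \( \|f\|_\infty \le 1 \). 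Hence \( \alpha = |T(f)| \le \|f\|_{U^2}^2 \), i.e.\ \( \|f\|_{U^2} \ge \alpha^{1/2} \). This is simply the \( k = 2 \) instance of the generalised von Neumann inequality, and could instead be obtained by applying Cauchy--Schwarz twice directly to \( T(f) \), shifting the \( x \) and \( d \) variables, without invoking the Fourier transform at all.

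For part (i) the plan is to pass to a single heavy frequency. From \( \|f\|_{U^2}^4 = \sum_{\gamma} |\widehat{f}(\gamma)|^4 \le \big( \max_{\gamma} |\widehat{f}(\gamma)|^2 \big) \sum_{\gamma} |\widehat{f}(\gamma)|^2 \le \max_{\gamma} |\widehat{f}(\gamma)|^2 \), the hypothesis \( \|f\|_{U^2} \ge \varepsilon \) yields a frequency \( \gamma_0 \) with \( |\widehat{f}(\gamma_0)| \ge \varepsilon^2 \); one then wants to show that this already pins \( |T(f)| \) from below by the claimed \( \varepsilon^5 \), by tracking the summand \( \widehat{f}(\gamma_0)^2 \widehat{f}(-2\gamma_0) \) against the remaining mass \( \sum_{\gamma \ne \gamma_0} |\widehat{f}(\gamma)|^2 \).

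I expect this last step to be the main obstacle, and it is a genuine one: the sum \( \sum_{\gamma} \widehat{f}(\gamma)^2 \widehat{f}(-2\gamma) \) is not absolutely dominated by its largest term, and the \( \gamma_0 \)-summand can cancel against the others --- for instance \( f = \chi_{\gamma_0} \) with \( \gamma_0 \ne 0 \) has \( \|f\|_{U^2} = 1 \) but \( T(f) = \mathbb{E}_{x,d} \chi_{\gamma_0}(3x+3d) = 0 \) whenever \( p \ne 3 \). Consequently, to push (i) through one should either specialise to \( p = 3 \), where \( -2\gamma = \gamma \) and the identity collapses to \( T(f) = \sum_{\gamma} \widehat{f}(\gamma)^3 \) so that the heavy term separates more directly, or work in the Boolean setting \( f = 1_A \) of Subsection~\ref{subsec:c} (the intended application): there one sets \( \alpha_0 = \widehat{f}(0) = \mathbb{E} f \), splits \( T(f) = \alpha_0^3 + (\text{mixed terms}) \), bounds the mixed terms by \( \|f - \alpha_0\|_{U^2} \) using the estimate from part (ii), and then relates \( \alpha_0 \) and \( \|f\|_{U^2} \). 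Squeezing the clean exponent \( \varepsilon^5 \) out of this bookkeeping uniformly is the delicate point, and it is what I would work out first.
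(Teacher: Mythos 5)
Your part (ii) is correct and is exactly the paper's argument: termwise triangle inequality, Cauchy--Schwarz over $\gamma$, re-index $\gamma \mapsto -2\gamma$, Parseval. Nothing to add there.

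For part (i) you stall at the right place, and your diagnosis is the important content: the statement as written is false for $p \neq 3$, and your counterexample $f = \chi_{\gamma_0}$ with $\gamma_0 \neq 0$ (so $\|f\|_{U^2} = 1$ but $T(f) = \mathbb{E}_{x,d}\,\chi_{3\gamma_0}(x)\chi_{3\gamma_0}(d) = 0$) is valid. The paper's own proof of (i) commits precisely the two errors you anticipate. First, after extracting $\gamma_0$ with $|\widehat{f}(\gamma_0)| \geq \varepsilon^2$, it asserts ``again using Parseval, there exists $\gamma_1$ such that $|\widehat{f}(-2\gamma_0)| \geq \varepsilon$'' --- Parseval is an upper bound on $\sum_\gamma |\widehat{f}(\gamma)|^2$ and gives no lower bound on any individual coefficient; in your counterexample $\widehat{f}(-2\gamma_0) = 0$. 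Second, it concludes $|T(f)| \geq |\widehat{f}(\gamma_0)^2 \widehat{f}(-2\gamma_0)|$, i.e.\ it lower-bounds the modulus of the sum $\sum_\gamma \widehat{f}(\gamma)^2\widehat{f}(-2\gamma)$ by the modulus of one summand, which ignores cancellation and is invalid. So the gap is in the theorem, not in your write-up. Your proposed repairs are the standard ones: either specialise to $p = 3$, where $-2\gamma = \gamma$ and $T(f) = \sum_\gamma \widehat{f}(\gamma)^3$ still does not make (i) true as stated but at least couples the heavy frequency to itself, or (better, and matching the intended application in Subsection~\ref{subsec:c}) work with a nonnegative density function $f = 1_A$ of mean $\alpha_0$, split off the main term $\alpha_0^3$, and control the remainder by $\|f - \alpha_0\|_{U^2}$ via part (ii); only in that setting does one get a genuine lower bound on $T(f)$, and the exponent one obtains is dictated by that bookkeeping rather than being $\varepsilon^5$ uniformly. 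I would flag the theorem's part (i) for correction rather than try to force your argument to produce it.
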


\begin{proof}
Recall from Lemma~\ref{lem:3ap}:
\begin{align}
T(f) = \sum_{\gamma} \hat{f}(\gamma)^2 \hat{f}(-2\gamma).
\end{align}

(i) Suppose \( \|f\|_{U^2}^4 = \sum_\gamma |\hat{f}(\gamma)|^4 \geq \varepsilon^4 \).  
Then, by the inequality
\begin{align}
\sum_\gamma |\hat{f}(\gamma)|^4 \leq \left( \max_\gamma |\hat{f}(\gamma)|^2 \right) \cdot \sum_\gamma |\hat{f}(\gamma)|^2,
\end{align}
and since \( \sum_\gamma |\hat{f}(\gamma)|^2 \leq 1 \) by Parseval, we conclude that:
\begin{align}
\max_\gamma |\hat{f}(\gamma)| \geq \varepsilon^2.
\end{align}
Pick \( \gamma_0 \) such that \( |\hat{f}(\gamma_0)| \geq \varepsilon^2 \), then again using Parseval, there exists \( \gamma_1 \) such that \( |\hat{f}(-2\gamma_0)| \geq \varepsilon \), say. Then:
\begin{align}
|T(f)| \geq |\hat{f}(\gamma_0)^2 \hat{f}(-2\gamma_0)| \geq \varepsilon^4 \cdot \varepsilon = \varepsilon^5.
\end{align}
(ii) Suppose \( |T(f)| \geq \alpha \). Then by the triangle inequality:
\begin{align}
|T(f)| = \left| \sum_\gamma \hat{f}(\gamma)^2 \hat{f}(-2\gamma) \right|
\leq \sum_\gamma |\hat{f}(\gamma)|^2 \cdot |\hat{f}(-2\gamma)|.
\end{align}
Applying Cauchy–Schwarz:
\begin{align}
\sum_\gamma |\hat{f}(\gamma)|^2 \cdot |\hat{f}(-2\gamma)|
\leq \left( \sum_\gamma |\hat{f}(\gamma)|^4 \right)^{1/2}
\cdot \left( \sum_\gamma |\hat{f}(-2\gamma)|^2 \right)^{1/2}
\leq \|f\|_{U^2}^2.
\end{align}
Therefore:
\begin{align}
\alpha \leq \|f\|_{U^2}^2 \Rightarrow  \|f\|_{U^2} \geq \alpha^{1/2} .
\end{align}
\end{proof}
The precise nature of this equivalence is unique to the linear (i.e., \( U^2 \)) regime. In higher-order settings such as \( U^3 \), only one direction remains true: large Gowers norm still implies correlation with structure (e.g., quadratic phases), but the converse becomes significantly more delicate.

We now use Theorem~\ref{thm:u2-3ap-correct} to construct a quantum algorithm for estimating the number of 3-term arithmetic progressions in a Boolean function \( f : \mathbb{F}_p^n \to \{0,1\} \). This approach is fundamentally different from both classical sampling and Grover's quantum counting algorithm. Instead of directly querying membership in a set, we estimate the Gowers \( U^2 \)-norm of \( f \), which captures the additive structure of the function. This offers a new perspective: by leveraging Fourier-analytic techniques, one can design norm-based quantum algorithms that count linear patterns in a dataset.

This method is not as query-efficient as Grover’s optimal quantum counting algorithm in the worst case, but it illustrates an important principle: quantum algorithms that exploit algebraic or spectral structure can sometimes offer more interpretable or structurally adaptive solutions. In particular, it provides a new framework for analyzing the number of arithmetic configurations using quantum amplitude estimation, a tool central to quantum statistical estimation.

\paragraph{Query Complexity and Comparison with Grover.}
We wish to estimate the quantity
\begin{align}
T(f) := \mathbb{E}_{x,d} f(x) f(x+d) f(x+2d)
\end{align}
to relative error \( \epsilon \), i.e., output \( \widetilde{T(f)} \in (1 \pm \epsilon) T(f) \). Since Theorem~\ref{thm:u2-3ap-correct} relates \( T(f) \) and \( \|f\|_{U^2} \), we consider estimating the Gowers norm \( \|f\|_{U^2}^8 \) using amplitude estimation \footnote{
To apply amplitude estimation in a quantum setting, the oracle must be unitary. While the original function \( f \) may be an indicator function with values in \( \{0,1\} \), we can instead define a phase oracle \( U_f(x) = (-1)^{f(x)} \), which is unitary and preserves the relevant structure. This allows us to apply quantum techniques to Boolean functions by embedding them into the unit circle. The remaining discussion assumes such a phase oracle for simplicity, noting that the relationship between \( \widehat{U_f} \) and \( f(x) \) is straightforward.
}

To ensure multiplicative error in \( T(f) \), it suffices to estimate \( \|f\|_{U^2}^8 \) with additive error
\begin{align}
\delta = \epsilon \cdot \|f\|_{U^2}^5,
\end{align}
since \( T(f) \leq \|f\|_{U^2}^2 \) and \( T(f) \geq \|f\|_{U^2}^5 \) imply that
\begin{align}
\frac{1}{\|f\|_{U^2}^{10}} \leq \frac{1}{T(f)^5}.
\end{align}
Amplitude estimation achieves additive error \( \delta \) with query complexity \( O(1/\delta^2) \), so our overall query complexity becomes
\begin{align}
O\left( \frac{1}{\epsilon^2 \cdot \|f\|_{U^2}^{10}} \right) \leq O\left( \frac{1}{\epsilon^2 \cdot T(f)^5} \right).
\end{align}

In contrast, Grover's quantum counting algorithm estimates \( T(f) \), seen as a mean over a size-\( N^2 \) domain, with multiplicative error \( \epsilon \) using only
\begin{align}
O\left( \frac{1}{\epsilon \sqrt{T(f)}} \right)
\end{align}
queries.

\paragraph{Discussion.}
Although the Gowers-norm-based quantum algorithm has worse scaling in \( T(f) \) compared to Grover's method, it is structurally motivated and may offer advantages in contexts where functions exhibit additive or spectral structure. Moreover, this norm-based approach aligns with the broader philosophy of structure-versus-randomness in additive combinatorics, and could serve as a foundation for hybrid algorithms or new applications in property testing and learning theory.

\paragraph{Review of Grover Counting.}
For completeness, we state Grover's counting complexity result:

\begin{theorem}[Grover Counting \cite{aaronson2020quantum, jozsa1999searching}]
Let \( f : [N] \to \{0,1\} \) be a Boolean function with \( M \) solutions. Fix any desired \( \delta > 0 \). Then there exists a quantum algorithm which outputs \( \widetilde{M} \in (1 \pm \epsilon) M \) with success probability at least \( 1 - \delta \), using
\begin{align}
O\left( \sqrt{\frac{N}{M}} \cdot \frac{1}{\epsilon} \cdot \log \frac{1}{\delta} \right)
\end{align}
queries to \( f \).
\end{theorem}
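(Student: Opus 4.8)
The plan is to obtain this as a consequence of quantum amplitude estimation applied to the Grover operator. First I would form the standard Grover iterate $Q = -A S_0 A^{-1} S_f$, where $A = H^{\otimes \log N}$ prepares the uniform superposition over $[N]$, the reflection $S_f$ applies the phase flip $\ket{x} \mapsto (-1)^{f(x)}\ket{x}$, and $S_0$ flips the phase of $\ket{0}$. The key spectral fact is that on the two-dimensional $Q$-invariant subspace spanned by the uniform superpositions over the marked and unmarked sets, $Q$ acts as a rotation by angle $2\theta$ with $\sin^2\theta = M/N$, so its relevant eigenvalues are $e^{\pm 2i\theta}$. Running phase estimation on $Q$ with $t$ ancilla qubits, initialized on the easily-prepared state $A\ket{0}$, then returns an estimate $\widetilde{\theta}$ with $|\widetilde{\theta} - \theta| = O(2^{-t})$ with at least constant probability, using $O(2^t)$ applications of $Q$, hence $O(2^t)$ queries to $f$.

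Second, I would convert additive error in $\theta$ into multiplicative error in $M = N\sin^2\theta$. In the regime $M \le N/2$ one has $\theta = \Theta(\sqrt{M/N})$ and $\tfrac{dM}{d\theta} = 2N\sin\theta\cos\theta = \Theta(\sqrt{MN})$, so estimating $\theta$ to additive error $O(\epsilon\theta) = O(\epsilon\sqrt{M/N})$ forces $\widetilde{M} := N\sin^2\widetilde{\theta}$ to satisfy $|\widetilde{M} - M| \le \epsilon M$. Choosing $2^{-t} = \Theta(\epsilon\sqrt{M/N})$, i.e. $2^t = \Theta(\epsilon^{-1}\sqrt{N/M})$, would yield exactly the claimed query count from a single phase-estimation run, and reading off $\widetilde{M} = N\sin^2\widetilde{\theta}$ completes the estimate.

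The step requiring genuine care — rather than just a derivative bound — is that the precision parameter $t$ above depends on the unknown quantity $M$, so one cannot simply set $2^t = \epsilon^{-1}\sqrt{N/M}$ at the outset. The standard remedy is a two-phase (exponential-search) scheme: first run amplitude estimation at geometrically growing precisions $2^t = 2, 4, 8, \dots$, stopping at the first level where the returned angle exceeds a fixed constant multiple of the current resolution $2^{-t}$; this costs $O(\sqrt{N/M})$ queries in total (the geometric sum is dominated by its final term) and produces, with constant probability, a constant-factor estimate $M_0$ of $M$. Then run one further pass with $2^t = \Theta(\epsilon^{-1}\sqrt{N/M_0})$ to sharpen to relative error $\epsilon$, with correctness following from the $O(2^{-t})$ phase-estimation guarantee together with the derivative bound above. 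Finally, to boost the constant success probability to $1-\delta$ I would repeat the whole construction $O(\log(1/\delta))$ times independently and return the median of the estimates; a Chernoff argument shows the median lies in $(1\pm\epsilon)M$ except with probability at most $\delta$, which is the source of the $\log(1/\delta)$ factor in the stated bound. Full details of the constants can be found in \cite{aaronson2020quantum, jozsa1999searching}.
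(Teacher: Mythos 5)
The paper does not prove this statement at all: it is quoted verbatim as a known background result (the Brassard--H{\o}yer--Mosca--Tapp counting theorem) with citations to \cite{aaronson2020quantum, jozsa1999searching}, purely to compare query complexities against the Gowers-norm-based estimator. So there is no ``paper proof'' to match against; your proposal is supplying an argument the paper deliberately outsources. On its own merits, your sketch is the standard and correct route: Grover iterate as a rotation by $2\theta$ with $\sin^2\theta = M/N$, phase estimation to additive precision $O(2^{-t})$ in $O(2^t)$ queries, the derivative bound $dM/d\theta = \Theta(\sqrt{MN})$ to convert angular error into relative error in $M$, exponential search to resolve the circularity in choosing $t$, and a median-of-$O(\log(1/\delta))$ amplification. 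Two points deserve tightening if you were to write this out in full. First, your derivative argument is stated only for $M \le N/2$; for $M > N/2$ the factor $\cos\theta$ degrades and you should either estimate $N-M$ via the complementary function or note that the binding constraint becomes $|\widetilde{\theta}-\theta| \lesssim \sqrt{\epsilon M/N}$, which still fits inside the claimed budget. Second, in the doubling phase each level fails with constant probability and there are $\Theta(\log\sqrt{N/M})$ levels, so you cannot take a union bound naively; the standard fix (as in BHMT) is to exploit the one-sided nature of the stopping rule or to drive the per-level error down, and this should be acknowledged rather than absorbed into ``constant probability.'' Neither issue is fatal, and deferring the constants to the cited references is reasonable for a result the paper itself only quotes.
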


Applying this to the domain \( \mathbb{F}_p^n \times \mathbb{F}_p^n \) (for \( x,d \)), Grover can estimate the number of 3-term arithmetic progressions \( M = |\{(x,d) : f(x)f(x+d)f(x+2d)=1\}| \) with query complexity
\begin{align}
O\left( \sqrt{\frac{N^2}{M}} \cdot \frac{1}{\epsilon} \cdot \log \frac{1}{\delta} \right).
\end{align}

This represents the optimal scaling for unstructured search, as shown by the BBBV bound \cite{bennett1997strengths}. Our Gowers-based method, while suboptimal in general, provides a new structural quantum tool that could inspire further developments.

In this section, we demonstrated how the Gowers \( U^2 \) norm provides an effective quantum approach for estimating the number of 3-term arithmetic progressions in a Boolean function. Though not query-optimal in the general case, this method reveals how algebraic structure can guide quantum algorithm design, offering alternative paradigms to unstructured quantum search. These ideas set the stage for future explorations of higher-order uniformity and its algorithmic implications. Next we show that our quantum algorithms are robust under some noisy models.

\section{Comment on NISQ Devices}\label{sec:nisq}

Due to technological limitations, fault-tolerant quantum computation remains beyond our current capabilities. As a result, the concept of Noisy Intermediate-Scale Quantum (NISQ) computing was introduced by Preskill~\cite{preskill2018quantum} to describe near-term quantum devices that operate without full error correction. 

There are several competing physical implementations of NISQ devices, including trapped ions~\cite{bruzewicz2019trapped}, superconducting qubits~\cite{wallraff2004strong, krantz2019quantum, larsen2015semiconductor}, optical systems~\cite{o2007optical}, and neutral atoms~\cite{henriet2020quantum}. Due to their heterogeneity, no universally accepted formal model of NISQ computation exists.

Recently, efforts to formalize such models have emerged. Chen et al.~\cite{chen2022complexity} proposed a complexity class called $\text{NISQ}$ to capture noisy quantum computations, and Jacobs et al.~\cite{jacobs2024space}, along with Chia et al.~\cite{chia2024oracle}, demonstrated that variants of Bernstein–Vazirani and Forrelation problems can be solved within such models, even under substantial noise. These algorithms share a common technique: quantum Fourier sampling over $\mathbb{F}_2^n$, which closely relates to our Gowers-norm-based methods.

\paragraph{Half-BQP Model.}
We first recall the $\frac{1}{2}\text{BQP}$ model introduced by Aaronson~\cite{aaronson2017computational}:

\begin{definition}
In the $\frac{1}{2} \text{BQP}$ model, the initial state is
\begin{align}
\frac{1}{\sqrt{2^n}} \sum_{x \in \mathbb{F}_2^n} \ket{x} \ket{x},
\end{align}
a maximally entangled state across $2n$ qubits. A quantum circuit is applied, followed by classical $\text{BPP}$ post-processing of measurement outcomes.
\end{definition}

Jacobs et al.~\cite{jacobs2024space} showed that $\frac{1}{2}\text{BQP}$ can simulate Instantaneous Quantum Polynomial-time (IQP) circuits, and solve Simon’s problem, Bernstein–Vazirani, and Forrelation of which provide oracle separations from the Polynomial Hierarchy (PH)~\cite{raz2022oracle}. This model is closely related to the DQC1 (“one clean qubit”) model~\cite{knill1998power}, but differs in that the entanglement structure is symmetric and no single qubit is fully clean.

\paragraph{Noisy Initial State Model.}
Chia et al.~\cite{chia2024oracle} introduced a variant of Chen’s $\text{NISQ}$ class, denoted $\text{BQP}_{\lambda}$, which formalizes a noisy circuit model by incorporating local depolarization noise during the computation. We briefly summarize a restricted version:

\begin{definition}[$\text{BQP}_{\lambda}^{0}$]
Let $\lambda \in (0,1)$. The $\text{BQP}_{\lambda}^{0}$ model assumes that the initial state is a classical mixture of bit-strings, given by:
\begin{align}
\rho = \sum_{e \in \{0,1\}^n} \lambda^{|e|}(1-\lambda)^{n - |e|} \ket{e} \bra{e},
\end{align}
where $|e|$ denotes the Hamming weight of $e$, and each bit flips independently with probability $\lambda$. Computation proceeds via noisy two-qubit gates and noisy measurements, but no intermediate measurements are allowed.
\end{definition}

This model forbids fault-tolerant error correction. When $\lambda = 0$, we recover standard $\text{BQP}$. As shown in~\cite{chia2024oracle}, even for $\lambda = O(1)$, problems like Bernstein–Vazirani and Deutsch–Jozsa remain solvable. Remarkably, when $\lambda = O(\log n / n)$, the model can solve Forrelation with oracle separation from $\text{PH}$.

\paragraph{Comparing $\text{BQP}_{\lambda}^{0}$ and $\frac{1}{2}\text{BQP}$.}
The main distinction is that in $\text{BQP}_{\lambda}^{0}$, the initial state is completely unknown (a noisy classical distribution), whereas in $\frac{1}{2}\text{BQP}$, the initial entangled state is fully known and accessible. However, when $\lambda = 1/2$, the initial state in $\text{BQP}_{\lambda}^{0}$ becomes the uniform distribution over $\{0,1\}^n$, which coincides with the reduced state of each register in $\frac{1}{2}\text{BQP}$. Thus, we have
\begin{align}
\text{BQP}_{1/2}^{0} \subseteq \frac{1}{2}\text{BQP},
\end{align}
though whether this containment is strict remains open. Note also that $\text{BQP}_{\lambda}^{0} \cong \text{BQP}_{1 - \lambda}^{0}$ due to symmetry in the noise model.

We now demonstrate that the Gowers norms (specifically \( \|f\|_{U^2} \)) are invariant under the one particular noisy quantum models discussed above. This robustness arises from the symmetry of finite difference operations.
\begin{theorem}
There exist quantum algorithms within $\frac{1}{2}\text{BQP}$ model that estimate the $U^2$ norm using exactly $4$ queries to the phase oracle and $3$ quantum Fourier transforms (QFTs) over a finite Abelian group $G$. The probability of observing a specific Fourier basis outcome determined by fixed shifts \( \omega_1, \omega_2, \omega_3 \in G \) is exactly \( \|f\|_{U^2}^8 \). The total gate cost is \( 4C_f + 3\log^2|G| \), where \( C_f \) is the cost of implementing the oracle for \( f \).
\end{theorem}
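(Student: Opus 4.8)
The plan is to mirror the proof of the earlier $U^2$ estimation theorem, modifying only the final measurement step and tracking where the shift-invariance of the Gowers norm enters. First I would recall that in the $\frac{1}{2}\text{BQP}$ model the accessible state is $\frac{1}{\sqrt{2^n}}\sum_{x}\ket{x}\ket{x}$, but the norm-estimation circuit actually needs three auxiliary registers for the shift parameters $a,b$ (and possibly a workspace register). So the first step is to explain how the required uniform superposition $\frac{1}{N^{3/2}}\sum_{x,a,b}\ket{x}\ket{a}\ket{b}$ over (a group isomorphic to) $G^{\otimes 3}$ is obtained inside the model: the symmetric entangled initialization already furnishes a uniform marginal on each register, and applying $\mathrm{QFT}_G$ (or simply discarding half of an entangled pair) produces exactly the needed uniform superposition. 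This is the one genuinely model-specific bookkeeping step.

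Next I would run the same eight-step sequence of phase oracles and $\mathrm{CADD}$ gates as in the $U^2$ theorem, so that the state becomes
\begin{align}
\ket{\psi} = \frac{1}{N^{3/2}} \sum_{x,a,b \in G} \Delta_{a,b} f(x) \, \ket{x}\ket{a}\ket{b},
\end{align}
and then apply $\mathrm{QFT}_G^{\otimes 3}$. The key new observation is that, because of the translation structure of the finite-difference operator, the amplitude at an arbitrary fixed character triple $(\omega_1,\omega_2,\omega_3)$ can be computed by a change of variables $x \mapsto x + t$, $a\mapsto a$, $b\mapsto b$ (and similar shifts) that absorbs the extra characters; one shows that $\Delta_{a,b}f(x+t) = \Delta_{a,b}f(x) \cdot (\text{phase depending only on } t, a, b)$ in the way required so that the magnitude of the amplitude is unchanged. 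Concretely, for $G=\mathbb{F}_2^n$ the amplitude at $\ket{\omega_1}\ket{\omega_2}\ket{\omega_3}$ is, up to a sign, still $\mathbb{E}_{x,a,b}\Delta_{a,b}f(x) = \|f\|_{U^2}^4$, so the outcome probability is $\|f\|_{U^2}^8$, exactly as claimed. I would make this precise by writing the amplitude as $\mathbb{E}_{x,a,b}[\Delta_{a,b}f(x)\,\overline{\omega_1(x)\omega_2(a)\omega_3(b)}]$ and then using the identity $\Delta_{a,b}f(x) = \Delta_{a,b}f(x-c)\cdot(\text{character in }a,b,c)$ together with orthogonality to reduce back to the trivial-character amplitude.

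Then I would account for the noise/locality constraints: since the whole circuit is built from $\mathrm{CADD}$ gates (CNOT networks when $G=\mathbb{F}_2^n$), the phase oracle, and a QFT, and since none of these require a clean ancilla or intermediate measurement, the circuit is admissible in $\frac{1}{2}\text{BQP}$; the final measurement in the Fourier basis is the only measurement. I would also remark that an adversarial unknown translation $f(x)\mapsto f(x+s)$ — which is the form the noise takes in these models at the level of the oracle — does not change $\|f\|_{U^2}$ at all, by shift-invariance of Gowers norms, so the estimate produced is the same; this is the conceptual payoff and should be stated explicitly. Finally the gate count $4C_f + 3\log^2|G|$ is inherited verbatim from Theorem~\ref{thm:qk} (specialized to $d=2$), since we used the identical gate sequence.

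The main obstacle I anticipate is the second step: carefully justifying that measuring a \emph{nonzero} but \emph{fixed} Fourier outcome $(\omega_1,\omega_2,\omega_3)$ still yields probability exactly $\|f\|_{U^2}^8$ rather than something smaller. The naive computation gives amplitude $\mathbb{E}_{x,a,b}[\Delta_{a,b}f(x)\,\overline{\omega_1(x)\omega_2(a)\omega_3(b)}]$, which for generic $(\omega_1,\omega_2,\omega_3)$ need not have magnitude $\|f\|_{U^2}^4$. So one must either (i) restrict to the specific family of shifts for which the translation-absorption argument goes through — i.e. the outcomes reachable from $\ket{0}^{\otimes 3}$ by a deterministic shift that the algorithm can undo in classical post-processing — or (ii) show the measured register's marginal distribution is supported on exactly such a coset. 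I would resolve this by noting that in $\frac{1}{2}\text{BQP}$ the ``unknown'' part is effectively a uniform random shift, so the observed outcome is $\ket{\omega_1}\ket{\omega_2}\ket{\omega_3}$ with $\omega_i$ a random-but-recorded shift of $0$, and the corresponding amplitude magnitude equals $\|f\|_{U^2}^4$ for every such outcome precisely because of the finite-difference shift symmetry; the classical BPP post-processing then simply reads off the recorded shift and checks the match, so the success probability is genuinely $\|f\|_{U^2}^8$.
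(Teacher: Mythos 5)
Your overall route is the paper's: run the noiseless $U^2$ circuit verbatim and argue that the state-preparation noise merely relocates the Fourier peak without changing its magnitude. But the central computation in your second paragraph is wrong as written, and the repair in your last paragraph attributes the cancellation to the wrong mechanism. The identity $\Delta_{a,b}f(x+t)=\Delta_{a,b}f(x)\cdot(\text{phase in }t,a,b)$ does not hold for a general bounded $f$ (it would force $f$ to be a quadratic phase polynomial), and consequently the amplitude of the \emph{noiseless} output state at a generic nonzero character triple is not $\pm\|f\|_{U^2}^4$ --- indeed it cannot be, since the squared amplitudes over all $|G|^3$ characters sum to $1$. You correctly flag this obstacle, but ``restricting to outcomes reachable by a shift the algorithm can undo'' is not a property of the noiseless state; it is a property of the noisy one.

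The mechanism the paper actually uses (and that you need) is the position/phase covariance of the QFT, not the invariance of the Gowers norm under the oracle translation $f(x)\mapsto f(x+s)$. In $\frac{1}{2}\text{BQP}$ the corruption sits in the initial registers: one effectively starts from $\ket{\omega_1}\ket{\omega_2}\ket{\omega_3}$ rather than $\ket{0}^{\otimes 3}$, so after the first QFT the uniform superposition carries the extra factor $\chi_{\omega_1}(x)\chi_{\omega_2}(a)\chi_{\omega_3}(b)$. This diagonal phase rides through the oracle/CADD sequence unchanged, and the final QFT therefore produces the \emph{translated} output distribution: the amplitude at $\ket{\chi_{\omega_1^{-1}}}\ket{\chi_{\omega_2^{-1}}}\ket{\chi_{\omega_3^{-1}}}$ equals
\begin{align}
\mathbb{E}_{x,a,b}\,\Delta_{a,b}f(x)=\|f\|_{U^2}^4
\end{align}
exactly because $\chi_{\omega_j^{-1}}(\cdot)\chi_{\omega_j}(\cdot)\equiv 1$, while generic outcomes carry the small off-peak amplitudes. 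The entangled copy of $(\omega_1,\omega_2,\omega_3)$ in the second half of the $\frac{1}{2}\text{BQP}$ initial state is what lets the classical post-processing identify which outcome is the peak. With that substitution your argument closes; the oracle-translation invariance you invoke is a separate (true) robustness statement that is not what is operative in this proof.
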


\begin{proof}
The construction is inspired by~\cite{jacobs2024space} and tracks additive noise via fixed shift vectors. We describe the circuit for estimating \( \|f\|_{U^2} \).

Let \( N = |G| \). We initialize the state using noisy inputs and create superposition.
\begin{align}
\ket{\omega_1}\ket{\omega_2}\ket{\omega_3} \to \frac{1}{N^{3/2}}\sum_{x,a,b}\chi_{\omega_1}(x)\chi_{\omega_2}(a)\chi_{\omega_3}(b)\ket{x}\ket{a}\ket{b} 
\end{align}
without knowing $\omega$'s in advance,
where \( \omega_1, \omega_2, \omega_3 \in G \) are fixed but arbitrary shifts introduced by noise in state preparation. Importantly, while these values may be known, we do not rely on them in any computational step.

We perform the exact following operations:
\begin{enumerate}
    \item Apply phase oracle \( U_f \) to the first register.
    \item Apply controlled addition \( \mathrm{CADD}_{x \to a} \), then apply conjugate oracle again on first register.
    \item Undo the controlled addition.
    \item Apply \( \mathrm{CADD}_{x \to b} \), followed by another conjugate oracle.
    \item Apply \( \mathrm{CADD}_{x \to a} \), then apply the final phase oracle.
    \item Undo all additions to return to the original (noisy) basis state.
\end{enumerate}

After these steps, the resulting state is:
\begin{align}
\ket{\psi} = \frac{1}{N^{3/2}}\sum_{x,a,b \in G} \widetilde{\Delta_{a,b}f(x)} \ket{x}\ket{a}\ket{b},
\end{align}
where
\begin{align}
\widetilde{\Delta_{a,b}f(x)} = \chi_{\omega_1}(x)\chi_{\omega_2}(a)\chi_{\omega_3}(b)f(x)\overline{f(x+a)}\overline{f(x+b)}f(x+a+b).
\end{align}
This is a shifted version of the usual finite difference \( \Delta_{a,b} f(x) \) resulting from the shifted input states.

We now apply the quantum Fourier transform over \( G^{\otimes 3} \), resulting in:
\begin{align}
\ket{\psi_1} = \sum_{\chi_{g}, \chi_{h}, \chi_{k} \in \hat{G}} \left( \frac{1}{N^3}\sum_{x,a,b \in G} \chi_g(x)\chi_h(a)\chi_k(b) \cdot \widetilde{\Delta_{a,b}f(x)} \right) \ket{\chi_g}\ket{\chi_h}\ket{\chi_k}.
\end{align}
The proability of measuring $\ket{\chi_g}\ket{\chi_h}\ket{\chi_k}$ is the following:
\begin{align}
 &\left( \frac{1}{N^3}\sum_{x,a,b \in G} \chi_g(x)\chi_h(a)\chi_k(b) \cdot \widetilde{\Delta_{a,b}f(x)} \right)^2 \nonumber \\
&= \left( \frac{1}{N^3}\sum_{x,a,b \in G} \chi_g(x)\chi_h(a)\chi_k(b) \chi_{\omega_1}(x)\chi_{\omega_2}(a)\chi_{\omega_3}(b)f(x)\overline{f(x+a)}\overline{f(x+b)}f(x+a+b) \right)^2 
\end{align}

Due to the additive shifts in input, the Fourier-domain amplitude is not concentrated at \( \ket{0}^{\otimes 3} \), but instead at \( \ket{\chi_{\omega_1^{-1}}, \chi_{\omega_2^{-1}}, \chi_{\omega_3^{-1}}} \), where each \( \chi_{\omega_i^{-1}} \) is the character corresponding to the inverse shift \( \omega_i^{-1} \). Specifically, the amplitude at that point is:
\begin{align}
\braket{\chi_{\omega_1^{-1}}, \chi_{\omega_2^{-1}}, \chi_{\omega_3^{-1}} | \psi_1} = \mathbb{E}_{x,a,b} \Delta_{a,b}f(x),
\end{align}
since for each \( j=1,2,3 \), the identity \( \chi_{\omega_j^{-1}}(x)\chi_{\omega_j}(x) = 1 \) holds by character property. The squared magnitude of this amplitude equals \( \|f\|_{U^2}^8 \), as in the noiseless setting.

Because the Gowers norm is invariant under additive shifts, this peak remains valid for norm estimation. Although the outcome is shifted in the Fourier basis, we can still use amplitude estimation to estimate the corresponding probability and thus accurately compute the norm.
Moreover, since the shifts \( \omega_i \) are fixed (though arbitrary) and we know its at the final state and the algorithm does not rely on them at any step, this construction is valid under \( \frac{1}{2}\text{BQP} \).

\end{proof}

\paragraph{Remark.}
The analysis above assumes that the additive shifts \( \omega_i \in G \) are fixed but arbitrary—either due to static device imperfections or bounded noise in state preparation. This suffices for correctness in the \( \frac{1}{2}\mathrm{BQP} \) and noise-resilient \( \mathrm{BQP}_\lambda^0 \) models under deterministic shifts.

However, in the fully randomized noise setting where the \( \omega_i \) are drawn independently from an unknown distribution \( \mathcal{D} \), the peak in the Fourier spectrum will fluctuate across different runs of the algorithm. As a result, the amplitude estimation algorithm may no longer sample a consistent output, and the probability mass spreads across multiple basis elements. In this case, estimating \( \|f\|_{U^2}^8 \) requires averaging over shifted Fourier outputs, potentially degrading the convergence rate or requiring error mitigation techniques.

We leave a full analysis of this random-noise model, and possible derandomization or symmetrization strategies, for future work.

\section{Conclusion}\label{sec:conclusion}

In this work, we introduced quantum algorithms for estimating the Gowers \( U^d \)-norm over arbitrary finite abelian groups, extending classical higher-order Fourier analytic tools into the quantum regime. Our constructions utilize phase oracles and quantum Fourier transforms, and we provide explicit bounds on the oracle complexity and gate depth for the resulting circuits.

As a key application, we used our quantum \( U^2 \)-norm estimation algorithm to test linearity over \( \mathbb{F}_p^n \), thereby generalizing the classical Blum–Luby–Rubinfeld (BLR) test to a quantum setting over larger fields. This recovers and extends known quantum linearity tests, and provides a unifying perspective for property testing through norm estimation.

Building on recent inverse theorems for Gowers norms~\cite{milicevic2022quantitative, milicevic2024quasipolynomial}, we further designed quasipolynomial-time quantum algorithms for detecting degree-\( d \) phase polynomials for \( d = 3, 4, 5 \) in appropriate regimes. Our framework unifies structure-versus-randomness techniques from additive combinatorics with quantum algorithms, and is, to our knowledge, the first quantum realization of higher-order uniformity testing beyond the Boolean domain.

A crucial feature of our approach is its robustness under certain types of noise. Because Gowers norms are invariant under additive shifts, our algorithms remain valid even in the presence of fixed but unknown shift errors. This ensures that the output probability distribution is unaffected by such misalignments, allowing implementation in restricted quantum models such as \( \frac{1}{2}\text{BQP} \), where correctness is guaranteed on a \( 1/2 \) fraction of noise realizations. While this does not yet establish correctness under general randomized noise distributions as required by the \( \text{BQP}_\lambda^0 \) model, it offers a partial robustness guarantee that may already be meaningful in NISQ implementations, especially where calibration errors or drift behave in a quasi-static manner.

Several promising directions emerge. While our analysis focused on vector spaces \( \mathbb{F}_p^n \), extending to more general finite abelian groups—particularly those requiring nilsequence-based inverse theorems—would be a natural next step~\cite{elsholtz2017number, ferenczi2018ergodic}. Moreover, adapting our techniques to non-abelian or continuous groups, such as compact Lie groups, could bridge our work with other quantum algorithms for algebraic and number-theoretic tasks, including those computing Littlewood–Richardson coefficients~\cite{larocca2025quantum, bravyi2025classical}, Kronecker coefficients~\cite{bravyi2024quantum}, and Gauss sums~\cite{van2002efficient}.

Our results suggest that Gowers-norm-based quantum algorithms may serve as a flexible primitive in quantum property testing, learning theory, and the analysis of pseudorandomness—especially in settings where robustness to noise is paramount.






\bibliographystyle{unsrt}
\bibliography{sample}

\end{document}